\newtheorem{theorem}{Theorem}[section]
\newtheorem{lemma}[theorem]{Lemma}
\newtheorem{proposition}[theorem]{Proposition}
\newtheorem{corollary}[theorem]{Corollary}
\theoremstyle{definition}
\newtheorem{problem}{Problem}
\newenvironment{enumeratei}{\begin{enumerate}[label=\textup{(\roman*)}, noitemsep, topsep=1.5mm plus 1.5mm, leftmargin=*, widest=iii]}{\end{enumerate}}
\newenvironment{enumeratea}{\begin{enumerate}[label=\textup{(\arabic*)}, noitemsep, topsep=0pt, leftmargin=*]}{\end{enumerate}}
\newenvironment{enumeratex}[1]{\begin{enumerate}[label=\textup{(#1\arabic*)}, noitemsep, topsep=1.5mm plus 1.5mm, leftmargin=*]}{\end{enumerate}}
\newenvironment{enumeratey}[1]{\begin{enumerate}[label=\textup{(#1)}, noitemsep, topsep=1.5mm plus 1.5mm, leftmargin=*]}{\end{enumerate}}
\renewenvironment{itemize}{\begin{itemorig}[label=\textbullet, noitemsep, topsep=1.5mm plus 1.5mm, labelsep=.6em, labelindent=.4em, leftmargin=*]}{\end{itemorig}}
\newcommand{\cgC}{\mathcal{C}}
\newcommand{\cgF}{\mathcal{F}}
\newcommand{\cgI}{\mathcal{I}}
\newcommand{\cgJ}{\mathcal{J}}
\newcommand{\cgK}{\mathcal{K}}
\newcommand{\cgL}{\mathcal{L}}
\newcommand{\cgM}{\mathcal{M}}
\newcommand{\cgR}{\mathcal{R}}
\newcommand{\cgS}{\mathcal{S}}
\newcommand{\cgT}{\mathcal{T}}
\newcommand{\cgX}{\mathcal{X}}
\newcommand{\cgY}{\mathcal{Y}}
\newcommand{\cgZ}{\mathcal{Z}}
\newcommand{\scrP}{\mathscr{P}}
\newcommand{\setN}{\mathbb{N}}
\newcommand{\setR}{\mathbb{R}}
\newcommand{\nbhyphen}{\kern 0pt\hbox{-}\nobreak\hskip 0pt\relax}
\let\leq\leqslant
\let\geq\geqslant
\let\setminus\smallsetminus
\let\edgeto\rightarrow
\DeclareMathOperator{\rect}{rect}
\def\leftside{\ell}
\def\rightside{r}
\def\topside{t}
\def\bottomside{b}
\let\old@setaddresses\@setaddresses
\def\@setaddresses{\bigskip\bgroup\parindent 0pt\let\scshape\relax\old@setaddresses\egroup}
\title[Coloring triangle-free rectangle overlap graphs with $O(\log\log n)$ colors]{\boldmath Coloring triangle-free rectangle overlap graphs with~$O(\log\log n)$ colors}
\author{Tomasz Krawczyk\and Arkadiusz Pawlik\and Bartosz Walczak}
\thanks{A journal version of this paper appeared in \emph{Discrete Comput.\ Geom.}, 53(1):199--220, 2015.
A preliminary version of this paper appeared as: Coloring triangle-free rectangular frame intersection graphs with $O(\log\log n)$ colors.\ In Andreas Brandst\"adt, Klaus Jansen, and R\"udiger Reischuk, editors, \emph{Graph-Theoretic Concepts in Computer Science (WG 2013)}, volume 8165 of \emph{Lecture Notes Comput.\ Sci.}, pages 333--344.\ Springer, Berlin, 2013.}
\thanks{Tomasz Krawczyk and Arkadiusz Pawlik were supported by Ministry of Science and Higher Education of Poland grant 884/N-ESF-EuroGIGA/10/2011/0 within ESF EuroGIGA project GraDR\@.
Bartosz Walczak was supported by Ministry of Science and Higher Education of Poland grant 884/N-ESF-EuroGIGA/10/2011/0 within ESF EuroGIGA project GraDR and by Swiss National Science Foundation grant 200020-144531.}
\address{Theoretical Computer Science Department, Faculty of Mathematics and Computer Science, Jagiellonian University, Krak\'ow, Poland}
\email{\mtt\{krawczyk,pawlik,walczak\mtt\}@tcs.uj.edu.pl}
\begin{document}

\begin{abstract}
Recently, it was proved that triangle-free intersection graphs of $n$ line segments in the plane can have chromatic number as large as $\Theta(\log\log n)$.
Essentially the same construction produces $\Theta(\log\log n)$-chromatic triangle-free intersection graphs of a variety of other geometric shapes---those belonging to any class of compact arc-connected sets in $\mathbb{R}^2$ closed under horizontal scaling, vertical scaling, and translation, except for axis-parallel rectangles.
We show that this construction is asymptotically optimal for intersection graphs of boundaries of axis-parallel rectangles, which can be alternatively described as overlap graphs of axis-parallel rectangles.
That is, we prove that triangle-free rectangle overlap graphs have chromatic number $O(\log\log n)$, improving on the previous bound of $O(\log n)$.
To this end, we exploit a relationship between off-line coloring of rectangle overlap graphs and on-line coloring of interval overlap graphs.
Our coloring method decomposes the graph into a bounded number of subgraphs with a tree-like structure that ``encodes'' strategies of the adversary in the on-line coloring problem.
Then, these subgraphs are colored with $O(\log\log n)$ colors using a combination of techniques from on-line algorithms (first-fit) and data structure design (heavy-light decomposition).
\end{abstract}

\maketitle

\section{Introduction}

A \emph{proper coloring} of a graph is an assignment of colors to the vertices of the graph such that no two adjacent ones are in the same color.
The minimum number of colors sufficient to color a graph $G$ properly is called the \emph{chromatic number} of $G$ and denoted by $\chi(G)$.
The maximum size of a clique (a set of pairwise adjacent vertices) in a graph $G$ is called the \emph{clique number} of $G$ and denoted by $\omega(G)$.
It is clear that $\chi(G)\geq\omega(G)$.
Classes of graphs for which there is a function $f\colon\setN\to\setN$ such that $\chi(G)\leq f(\omega(G))$ holds for any graph $G$ in the class are called \emph{$\chi$-bounded}.
A graph is \emph{triangle-free} if it does not contain a triangle, that is, if $\omega(G)\leq 2$.

It was observed in the 1940s that large cliques are not necessary for the chromatic number to grow.
Various classical constructions \cite{Myc55,Zyk49} show that it can be arbitrarily large even for triangle-free graphs.
Kim \cite{Kim95} constructed triangle-free graphs on $n$ vertices with chromatic number $\Theta(\sqrt{n/\log n})$.
An earlier result due to Ajtai, Koml\'os and Szemer\'edi \cite{AKS80} implies that this bound is tight.

In this paper, we focus on the relation between the chromatic number and the number of vertices in classes of triangle-free graphs arising from geometry.
The \emph{intersection graph} of a family of sets $\cgF$ is the graph with vertex set $\cgF$ and edge set consisting of pairs of intersecting members of $\cgF$.
The \emph{overlap graph} of a family of sets $\cgF$ is the graph with vertex set $\cgF$ and edge set consisting of pairs of members of $\cgF$ that intersect but neither contains the other.
We always assume that the family of sets $\cgF$ is finite.

It is well known that interval graphs (intersection graphs of intervals in $\setR$) are perfect---they satisfy $\chi(G)=\omega(G)$.
The study of the chromatic number of graphs with geometric representation was initiated in a seminal paper of Asplund and Gr\"unbaum \cite{AG60}, where they proved that the intersection graphs of families of axis-parallel rectangles in the plane are $\chi$-bounded.
In particular, they proved the tight bound of $6$ on the chromatic number of triangle-free intersection graphs of axis-parallel rectangles.
Gy\'arf\'as \cite{Gya85,Gya86} proved that the class of overlap graphs of intervals in $\setR$ is $\chi$-bounded.
By contrast, Burling \cite{Bur65} showed that triangle-free intersection graphs of axis-parallel boxes in $\setR^3$ can have arbitrarily large chromatic number.
Pawlik et al.\ \cite{PKK+13,PKK+14} provided a construction of triangle-free intersection graphs of segments, and more generally, triangle-free intersection graphs of families of vertically and horizontally scaled translates of any fixed arc-connected compact set in $\setR^2$ that is not an axis-parallel rectangle, with arbitrarily large chromatic number.
These graphs require $\Omega(\log\log n)$ colors, where $n$ is the number of vertices.
One of the problems posed in \cite{PKK+14} is to determine (asymptotically) the maximum chromatic number that a triangle-free intersection graph of $n$ segments can have.

We solve the analogous problem for triangle-free intersection graphs of \emph{frames}, which are boundaries of axis-parallel rectangles.
These graphs can be alternatively defined as overlap graphs of axis-parallel rectangles.
Therefore, they can be considered as two-dimensional generalizations of interval overlap graphs.
We show that the construction of Pawlik et al.\ is asymptotically best possible for these graphs.

\begin{theorem}
\label{thm:loglog}
Every triangle-free overlap graph of\/ $n$ rectangles (intersection graph of\/ $n$ frames) can be properly colored with\/ $O(\log\log n)$ colors.
\end{theorem}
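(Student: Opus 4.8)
The plan is to reduce the off-line coloring of a rectangle overlap graph to a family of on-line coloring problems for interval overlap graphs, with one coordinate playing the role of time. Write each rectangle as a product $I\times J$ of an $x$-interval and a $y$-interval, and recall that two rectangles are adjacent exactly when both projections intersect and neither rectangle contains the other. For an adjacent pair, each coordinate either \emph{crosses} (its projections overlap without nesting) or \emph{nests}; since nesting in both coordinates would force containment, only three combinations occur, and, refining by orientation, the edge set splits into a constant number of classes. It therefore suffices to color each class with $O(\log\log n)$ colors. Fixing a class in which, say, the $x$-projections nest, I would sweep the rectangles by $x$-coordinate, so that the sweep reveals the $y$-intervals as an on-line stream whose conflicts are precisely the overlaps of that class; the overlap graph of intervals is a circle graph, which will let me import known facts about its on-line and off-line chromatic behavior.

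The structural core is a forest that records the nesting of the projections in the swept coordinate: the ancestor relation marks which rectangle is ``responsible'' for forcing a color on a later one, and each branching corresponds to the adversary in on-line interval-overlap coloring doubling the number of colors it can force. This is the precise sense in which the forest \emph{encodes an adversary strategy}. Two facts then drive the bound. First, off-line a triangle-free circle graph has bounded chromatic number, so the obstruction to using few colors is entirely the on-line, left-to-right character of the sweep rather than the geometry of a single coordinate. Second, triangle-freeness keeps the strategy forest shallow: since an adversary needs exponentially many intervals to force each further color, the number of nested on-line subproblems that $n$ rectangles can realize is only $O(\log n)$. The decisive advantage over a genuine on-line algorithm is that the whole forest is available off-line, so I may look ahead and color globally.

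To turn this shallow forest into a coloring I would combine first-fit with a heavy-light decomposition. Run directly along the sweep, first-fit already colors a triangle-free interval-overlap instance with $O(\log n)$ colors, which is exactly the source of the previously known bound. The improvement to $O(\log\log n)$ comes from decomposing the strategy forest into heavy paths, coloring the interval overlap graph induced on each heavy path by first-fit, and distributing palettes across the light edges so that the global color count tracks the number of nested on-line subproblems a root-to-leaf route meets, rather than the number of colors forced inside any single subproblem. Because the realizable forest is shallow, this reorganization drops the count from $O(\log n)$ to $O(\log\log n)$, matching the doubly-exponential growth ($2^{2^{\Theta(k)}}$ rectangles per $k$ colors) of the Pawlik et al.\ lower bound.

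I expect the main obstacle to be the quantitative accounting that binds these pieces together. The hardest step should be proving, from triangle-freeness alone, that each class's strategy forest cannot be made rich enough to force more than $O(\log\log n)$ colors. This needs, first, a faithful correspondence between the sweep order and an on-line presentation, so that triangle-freeness of the geometric graph genuinely becomes the ``one extra color per exponential blow-up'' bottleneck; and second, a careful check that first-fit along the heavy paths, together with the layered palettes across light edges, never exceeds $O(\log\log n)$ colors and that palettes of distinct layers and distinct classes compose without interference. Setting up the forest/strategy correspondence cleanly, and then running the first-fit and heavy-light bookkeeping on it, is where I expect essentially all the difficulty to lie.
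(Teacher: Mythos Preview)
Your architecture matches the paper's: reduce to a bounded number of directed subfamilies, build a meta-forest recording the sweep order (the paper's ``overlap game graph''), and finish with heavy-light decomposition plus first-fit. But the way you combine the last two pieces is wrong, and a key structural hypothesis is missing.

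The claim that ``triangle-freeness keeps the strategy forest shallow'' is false: the meta-forest can be a single path of length $n$. Nothing in the argument bounds its depth. More importantly, your scheme of running first-fit \emph{on each heavy path} and handing out separate palettes across light edges only yields $O(\log n)$ colors, since a root-to-leaf path already meets $\Theta(\log n)$ heavy paths. The paper does essentially the opposite. It first proves (its Lemma~\ref{lem:primary}) that the graph induced on any root-to-leaf path of the meta-forest is a forest with \emph{no two overlapping edges}; hence edges \emph{within} each heavy path are killed by a single $2$-coloring. First-fit is then run \emph{globally} on each of the two resulting classes, in which every surviving edge jumps between heavy paths. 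The doubly-exponential bound comes from a forcing-tree argument (Lemma~\ref{lem:FF_leading}): if first-fit reaches color $k$ at some vertex $v$, one extracts a binary tree of $2^{k-1}$ forcing vertices along the path to $v$, and the non-overlapping-edges property pins consecutive internal vertices of that tree to distinct heavy paths, so the path meets at least $2^{k-2}$ heavy paths and $n\geq 2^{2^{k-2}}-1$.

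You are also missing the ``clean'' condition (Lemma~\ref{lem:frame-clean}, equivalently condition~\ref{def:forbidden-structure}): no third interval lies inside the intersection of two overlapping ones. This is obtained by one further $2$-partition of the frames and is exactly what makes Lemma~\ref{lem:primary} go through; without it the non-overlapping-edges structure fails and the whole counting collapses. The paper also inserts a primary/secondary reduction (Lemma~\ref{lem:secondary}) before all of this, at the cost of another factor of~$2$.
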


For the completeness of exposition, we also include the proof of the bound from the other side, which appears in \cite{PKK+13}.

\begin{theorem}
\label{thm:construction}
There are triangle-free overlap graphs of\/ $n$ rectangles (intersection graphs of\/ $n$ frames) with chromatic number\/ $\Theta(\log\log n)$.
\end{theorem}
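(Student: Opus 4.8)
The plan is to reprove the lower bound by exhibiting, for every $k\in\NN$, a triangle-free overlap graph $G_k$ of frames with $\chi(G_k)\geq k+1$ on at most $2^{2^{O(k)}}$ frames; rearranging the size bound then gives $\chi(G_k)\geq\Omega(\log\log n)$, and together with Theorem~\ref{thm:loglog} this pins the maximum chromatic number down to $\Theta(\log\log n)$. Following Pawlik et al.\ \cite{PKKLMTW-segments,PKKLMTW-shapes}, I would build the $G_k$ inductively, and the correct object to induct on is not the graph alone but a \emph{probe}: a finite family $\cgF_k$ of frames whose overlap graph is triangle-free, together with a finite set $\cgS_k$ of pairwise disjoint open axis-parallel rectangles (\emph{slots}), each disjoint from every frame of $\cgF_k$ and equipped with a canonical ``probing frame'' that could be drawn inside it. Writing $N(S)$ for the set of frames that the probing frame of a slot $S$ would overlap, the invariant $(*_k)$ to maintain is: for every proper coloring of the overlap graph of $\cgF_k$ and every slot $S\in\cgS_k$, the set $N(S)$ is independent and realizes at least $k$ distinct colors. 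Under $(*_k)$, actually realizing the probing frame of some slot $S$ produces a frame adjacent to $\geq k$ differently-colored frames, hence forces a $(k+1)$-st color, so the augmented family has chromatic number $\geq k+1$.

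The inductive step converts a level-$k$ probe into a level-$(k+1)$ probe. For each slot $S\in\cgS_k$ I would first \emph{realize} its probing frame, adding to the family a frame $p_S$ overlapping exactly the independent crossing set $N(S)$; by $(*_k)$ the color of $p_S$ is a $(k+1)$-st one. I would then attach, around $p_S$, a bounded number of scaled-down copies of the entire level-$k$ gadget, positioned so that each new (finer) slot has a probing frame overlapping both $p_S$ and the crossing set of a sub-slot of an attached copy; arranging the copies so that their colors are forced to coincide with those of $N(S)$ makes each new crossing set realize $k+1$ colors, which is exactly $(*_{k+1})$. This color ``synchronization'' between $p_S$ and the attached copies is the heart of the construction and the reason the size must grow fast. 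Counting one realized frame and a bounded number of copies per slot, and using $|\cgS_k|\leq|\cgF_k|$, gives $|\cgS_{k+1}|\leq|\cgS_k|^2$ and $|\cgF_{k+1}|\leq|\cgS_k|\cdot|\cgF_k|+O(|\cgS_k|)\leq|\cgF_k|^2+O(|\cgF_k|)$, whence $|\cgF_k|\leq 2^{2^{O(k)}}$. Starting from the trivial level-$1$ probe, a single frame with one slot whose probing frame overlaps it, the level-$k$ probe yields $\chi\geq k+1$ on $2^{2^{O(k)}}$ frames.

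The part needing the most care is the geometric realization, and it is here that the restriction to the \emph{overlap} (boundary-crossing) relation rather than plain intersection is essential: this is precisely why axis-aligned rectangles are \emph{not} among the exceptions for overlap graphs even though they are for intersection graphs. I would realize the crossing sets using very thin, nearly degenerate frames, alternately tall-and-narrow and short-and-wide, so that ``two frames overlap'' behaves like ``their bounding segments cross''; this imports the segment incidence pattern of \cite{PKKLMTW-segments} while keeping the family axis-parallel. At every step two things must be verified: that each crossing set $N(S)$ is genuinely independent, so that no triangle arises and it is consistent to spread $k$ colors across it, and that the edges the argument relies on are true overlaps rather than nestings, i.e.\ that no frame ends up containing another where an edge is needed and that $p_S$ and the probing frames cross, not contain, the frames they must meet. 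The main obstacle is confirming that the scaling and nesting of the attached copies can be carried out inside the slots, uniformly over all slots, without disturbing these adjacency and non-adjacency relations and while enforcing the color synchronization that drives the inductive increment; once that is in place, the $\Theta(\log\log n)$ bound follows from the two-line size recursion above.
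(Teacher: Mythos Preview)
Your route differs from the paper's. The paper factors the lower bound through the abstraction of \emph{overlap game graphs}: Lemma~\ref{lem:construction} builds, via the on-line interval overlap game, a triangle-free overlap game graph on which Presenter forces $k$ colors with at most $2^k-1$ intervals and branching at most $k$, and Lemma~\ref{lem:relation} then realizes any overlap game graph as a clean rightward-directed family of frames. You instead propose to carry out the probe/slot construction of \cite{PKKLMTW-shapes} directly with frames. That direct route is legitimate in principle, and your size recursion and the resulting $\Omega(\log\log n)$ are fine; what the paper's detour buys is precisely the separation you identify as ``the part needing the most care'': all geometry is packaged once and for all into Lemma~\ref{lem:relation}, and the lower bound becomes a clean two-branch Presenter strategy.

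However, your description of the inductive step has a real gap. You write that you will ``arrange the copies so that their colors are forced to coincide with those of $N(S)$''; this cannot be done. A scaled-down copy placed in or around a slot is disjoint from $N(S)$, so in a proper coloring of the whole family its colors are completely unconstrained relative to those used on $N(S)$. The actual construction does not force coincidence---it \emph{branches} on it. For each slot $S$ one nests a copy of the level-$k$ gadget inside $S$ (no frame $p_S$ is realized at this stage), and for each sub-slot $S'$ of that copy one adds a ``diagonal'' frame $J$ overlapping exactly $N(S')$, then creates \emph{two} new slots: one with crossing set $N(S)\cup N(S')$ and one with crossing set $N(S)\cup\{J\}$. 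For any proper coloring, either the color sets on $N(S)$ and $N(S')$ differ, and then the first new slot already witnesses $k+1$ colors; or they coincide, and then $J$ receives a color outside that common set, so the second new slot witnesses $k+1$ colors. This is precisely the dichotomy in the paper's proof of Lemma~\ref{lem:construction} (the two cases ``the sets of colors used on $\cgJ(x)$ and $\cgJ'(x')$ differ'' versus ``they coincide, so play $J$''), and it is the reason the number of slots squares at each step. Also note that one never materializes $p_S$ during the induction; the probing frames remain virtual, and only a single one is realized at the very end to force the $(k+1)$-st color.
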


Note the difference in the behavior of rectangle intersection graphs and rectangle overlap graphs.
The former have chromatic number bounded by a function of their clique number.
The latter can have arbitrarily large chromatic number even when they are triangle-free.

Theorem \ref{thm:loglog} provides the first asymptotically tight bound on the chromatic number for a natural class of geometric intersection or overlap graphs that does not allow a constant bound.
So far, best upper bounds were of order $O(\log n)$, following from the results of McGuinness \cite{McG00} and Suk \cite{Suk14} for intersection graphs of families of shapes including segments and frames, or polylogarithmic in $n$, obtained by Fox and Pach \cite{FP14} for arbitrary families of curves with bounded clique number.
The only known lower bounds follow from the above-mentioned constructions of Burling and Pawlik et al.
We hope that our ideas will lead to improving the bounds for other important classes, in particular, for segment intersection graphs.

\emph{On-line coloring} is an extensively studied variant of the coloring problem.
The difference between ordinary and on-line coloring is that in the on-line setting, the vertices are introduced one by one and the coloring algorithm must assign colors to them immediately, knowing only the edges between vertices presented thus far.
Our proof exploits a correspondence between on-line coloring of interval overlap graphs and ordinary (off-line) coloring of rectangle overlap graphs.
We obtain a structural decomposition of an arbitrary rectangle overlap graph with bounded clique number into a bounded number of so-called \emph{directed families} of rectangles.
For families whose overlap graphs are triangle-free, we can further decompose them into so-called \emph{clean families}, in which no rectangle is entirely contained in the intersection of two overlapping rectangles.
It turns out that overlap graphs of clean directed families of rectangles have a particular structure of what we call \emph{overlap game graphs}, that is, they can be viewed as encodings of adversary strategies in the on-line interval overlap graph coloring problem.
We succeed in coloring overlap game graphs with $O(\log\log n)$ colors by combining two ideas:\ heavy-light decomposition and first-fit coloring.
The reductions to directed and clean families are purely geometric---they exploit $\chi$-boundedness results and coloring techniques from \cite{AG60,Gya85,McG00,Suk14}.

\section{Overview}
\label{sec:overview}

All rectangles that we consider are axis-parallel, that is, their sides are parallel to the horizontal or the vertical axes.
Throughout the paper, we also assume that all rectangles are in general position, that is, no corner of any rectangle lies on the boundary of another rectangle.
We can easily adjust any family of rectangles so as to satisfy this condition without changing the overlap relation, just by expanding each rectangle in every direction by a tiny amount inversely proportional to the area of the rectangle.
The boundary of a rectangle is a \emph{frame}.
The \emph{filling rectangle} of a frame $F$, denoted by $\rect(F)$, is the rectangle whose boundary is $F$.

From now on, we will work with families of frames and their intersection graphs.
We denote the chromatic and the clique numbers of the intersection graph of a family of frames $\cgF$ by $\chi(\cgF)$ and $\omega(\cgF)$, respectively.
Triangles, cliques and connected components of the intersection graph of $\cgF$ are simply called triangles, cliques and connected components of $\cgF$.
The $x$-coordinates of the left and right sides and the $y$-coordinates of the bottom and top sides of a frame $F$ are denoted by $\leftside(F)$, $\rightside(F)$, $\bottomside(F)$, $\topside(F)$, respectively.
Thus $\rect(F)=[\leftside(F),\rightside(F)]\times[\bottomside(F),\topside(F)]$.

We distinguish the following types of frame intersections, illustrated in Figures \ref{fig:crossing}--\subref{fig:diagonal2}:\ \emph{crossings}, \emph{leftward\nbhyphen}, \emph{rightward\nbhyphen}, \emph{downward\nbhyphen} and \emph{upward-directed intersections}, and \emph{diagonal intersections}.
A family of frames $\cgF$ is \emph{leftward\nbhyphen}, \emph{rightward\nbhyphen}, \emph{downward\nbhyphen} or \emph{upward-directed} if the following condition is satisfied:
\begin{enumeratex}{F}
\item\label{cond:frame-directed} for any two intersecting frames in $\cgF$, their intersection is leftward\nbhyphen, rightward\nbhyphen, downward\nbhyphen{} or upward-directed, respectively.
\end{enumeratex}
A family of frames $\cgF$ is \emph{directed} if it is leftward\nbhyphen, rightward\nbhyphen, downward\nbhyphen{} or upward-directed.
Note that in a directed family, we still allow only one of the four types of directed intersections, we just do not specify which one.
A family of frames $\cgF$ is \emph{clean} if the following holds:
\begin{enumeratex}{F}
\setcounter{enumi}{1}
\item\label{cond:frame-clean} no frame in $\cgF$ is enclosed in two intersecting frames in $\cgF$ (see Figure \ref{fig:clean}).
\end{enumeratex}

\begin{figure}[t]
\centering
\subfigure[\unskip]{\label{fig:crossing}\begin{tikzpicture}[scale=0.8]
  \draw (0.0,0.4) rectangle (1.5,1.1);
  \draw (0.4,0.0) rectangle (1.1,1.5);
\end{tikzpicture}}
\hspace{4mm}
\subfigure[\unskip]{\label{fig:leftward}\begin{tikzpicture}[scale=0.8]
  \draw (0.0,0.4) rectangle (1.0,1.1);
  \draw (0.5,0.0) rectangle (1.5,1.5);
\end{tikzpicture}}
\hspace{4mm}
\subfigure[\unskip]{\begin{tikzpicture}[scale=0.8]
  \draw (0.5,0.4) rectangle (1.5,1.1);
  \draw (0.0,0.0) rectangle (1.0,1.5);
\end{tikzpicture}}
\hspace{4mm}
\subfigure[\unskip]{\begin{tikzpicture}[scale=0.8]
  \draw (0.4,0.0) rectangle (1.1,1.0);
  \draw (0.0,0.5) rectangle (1.5,1.5);
\end{tikzpicture}}
\hspace{4mm}
\subfigure[\unskip]{\label{fig:upward}\begin{tikzpicture}[scale=0.8]
  \draw (0.4,0.5) rectangle (1.1,1.5);
  \draw (0.0,0.0) rectangle (1.5,1.0);
\end{tikzpicture}}
\hspace{4mm}
\subfigure[\unskip]{\label{fig:diagonal1}\begin{tikzpicture}[scale=0.8]
  \draw (0.0,0.0) rectangle (1.0,1.0);
  \draw (0.5,0.5) rectangle (1.5,1.5);
\end{tikzpicture}}
\hspace{4mm}
\subfigure[\unskip]{\label{fig:diagonal2}\begin{tikzpicture}[scale=0.8]
  \draw (0.5,0.0) rectangle (1.5,1.0);
  \draw (0.0,0.5) rectangle (1.0,1.5);
\end{tikzpicture}}
\hspace{4mm}
\subfigure[\unskip]{\label{fig:clean}\begin{tikzpicture}[scale=0.8]
  \draw (0.5,0.5) rectangle (1.25,1.0);
  \draw (0.25,0.25) rectangle (1.75,1.25);
  \draw (0.0,0.0) rectangle (1.5,1.5);
\end{tikzpicture}}
\caption{\subref{fig:crossing} crossing; \subref{fig:leftward}--\subref{fig:upward} leftward\nbhyphen, rightward\nbhyphen, downward\nbhyphen{} and upward-directed intersections, respectively; \subref{fig:diagonal1}--\subref{fig:diagonal2} diagonal intersections; \subref{fig:clean} forbidden configuration in a clean rightward-directed family}
\end{figure}
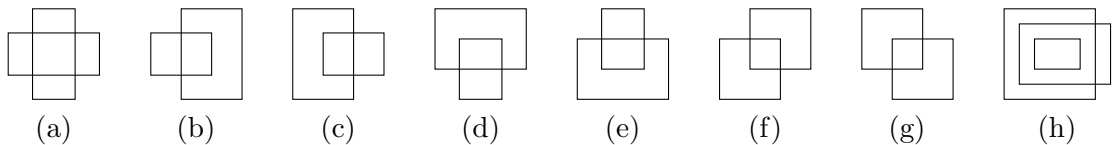

The first step in our proof of Theorem \ref{thm:loglog} is to reduce the problem of coloring arbitrary triangle-free families of frames to the problem of coloring clean directed triangle-free families of frames.
This is done by the following two lemmas, the first of which works with any bound on the clique number, not only for triangle-free families.

\begin{lemma}
\label{lem:frame-geometry}
Every family of frames\/ $\cgF$ with\/ $\omega(\cgF)\leq k$ can be partitioned into a bounded number of directed subfamilies, where the bound depends only on\/ $k$.
\end{lemma}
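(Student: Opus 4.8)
The plan is to produce a coloring of $\cgF$ whose number of colors is bounded in terms of $k$ and each of whose color classes is a directed family. Recall that an intersecting pair of frames falls into exactly one of six types, distinguished by how the $x$-intervals $[\leftside,\rightside]$ and $y$-intervals $[\bottomside,\topside]$ of the two frames relate: for a crossing both intervals are nested but with the two containments reversed, for a directed intersection exactly one of them is nested while the other properly crosses, and for a diagonal intersection both of them properly cross. A color class is a directed family precisely when no two of its frames form a crossing or a diagonal intersection and all of its directed intersections share one direction. I would first cut all crossings and all diagonal intersections (color so that the two frames of every such pair get different colors), and then refine the resulting \emph{directed-only} classes so that within each refined class all directed intersections point the same way.

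Crossings are easy because they form a comparability graph. Writing $F\precede G$ when $F$ and $G$ cross and $[\leftside(F),\rightside(F)]\supsetneq[\leftside(G),\rightside(G)]$ (equivalently $[\bottomside(G),\topside(G)]\supsetneq[\bottomside(F),\topside(F)]$, so $F$ is the wide-and-short frame and $G$ the narrow-and-tall one), one checks that $\precede$ is a strict partial order: if $F\precede G\precede H$ then the $x$-intervals satisfy $[\leftside(F),\rightside(F)]\supsetneq[\leftside(G),\rightside(G)]\supsetneq[\leftside(H),\rightside(H)]$ and the $y$-intervals the reverse chain $[\bottomside(H),\topside(H)]\supsetneq[\bottomside(G),\topside(G)]\supsetneq[\bottomside(F),\topside(F)]$, which is exactly the condition defining $F\precede H$. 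Since the frames of any pairwise-crossing set are pairwise intersecting, such a set is a clique of $\cgF$ and hence has at most $k$ elements; thus the crossing graph is a comparability graph with clique number at most $k$, and, being perfect, it admits a proper coloring with at most $k$ colors. This cuts all crossings.

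Diagonal intersections are the first real obstacle, because, unlike the crossing graph, the diagonal graph is not a comparability graph in general (the natural down-left orientation fails to be transitive, as $F\to G\to H$ may leave $F$ and $H$ non-adjacent), so perfection is unavailable and a genuine $\chi$-bounding argument is needed. They must nevertheless be cut: in any valid partition every diagonal pair is split between two directed families, so the diagonal graph is forced to be $\chi$-bounded in terms of $k$. Its edges are a subset of the overlap edges of the filling rectangles $\rect(F)$, each diagonal pair having exactly one corner of each frame inside the other, while any pairwise-diagonal set is again a clique of $\cgF$ and so has at most $k$ elements. I would bound its chromatic number by a function of $k$ using the interval-overlap (circle-graph) $\chi$-bounding technique of Gy\'{a}rf\'{a}s \cite{Gya85,Gya86} applied to the two properly-crossing coordinate families simultaneously, or alternatively by an Asplund--Gr\"{u}nbaum-style sweep \cite{AG60} driven by the clique bound. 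Composing this coloring with the crossing coloring of the previous step cuts both crossings and diagonals and leaves a number, bounded in $k$, of directed-only families.

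The heart of the argument is refining a directed-only family so that its directed intersections all point the same way, and this is the step I expect to be hardest. One cannot simply ``cut'' three of the four directions, because a single direction already carries unbounded chromatic number---a directed family can require $\Omega(\log\log n)$ colors (cf.\ Theorem~\ref{thm:construction})---so the refinement must be a \emph{direction-consistent} vertex coloring rather than the removal of an unbounded subgraph. Here one exploits the containment hidden in each directed intersection: in a directed-only family every edge has a well-defined inner frame (the one nested in a coordinate) and outer frame, the nested coordinate ($x$ or $y$) separating $\{$up, down$\}$ from $\{$left, right$\}$, and within a fixed nested coordinate the two remaining directions are pinned by the sign of $\leftside(\text{inner})-\leftside(\text{outer})$, respectively $\bottomside(\text{inner})-\bottomside(\text{outer})$. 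I would organise these inner/outer relations through the interval-containment partial orders of the nested coordinates and assign color classes by a bounded statistic of that order, using $\omega(\cgF)\le k$ to limit how many frames can co-nest and hence to keep the number of classes under control. Carrying out this bookkeeping so that the four directions are genuinely separated while the number of colors stays bounded in terms of $k$ is the main technical difficulty; once it is done, the composition of the crossing, diagonal, and direction colorings yields the desired partition of $\cgF$ into a number of directed subfamilies bounded by a function of $k$.
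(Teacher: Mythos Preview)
Your first step---cutting crossings via the comparability order---is correct and is exactly the paper's Lemma~\ref{lem:cross}. Beyond that, the proposal is a plan rather than a proof, and the gaps are precisely where all the content of the lemma lies. For Step~2 you assert that the diagonal graph is $\chi$-bounded and gesture at Gy\'arf\'as's circle-graph argument ``applied to the two coordinate families simultaneously'', but the diagonal relation is the conjunction of two overlap relations, not itself an interval overlap graph, and you give no actual bound. More seriously, even granting Step~2, Step~3 is essentially the whole lemma. You correctly note that a single-direction family already has unbounded chromatic number, so one cannot cut any one direction by a proper coloring; your fallback, ``a bounded statistic'' of the containment orders, is left unspecified, and $\omega(\cgF)\le k$ does \emph{not} bound containment depth (nested frames need not intersect). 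Nothing in the proposal explains how a directed-only family gets split into boundedly many single-direction pieces, and you acknowledge as much in your final paragraph.

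The paper's route is quite different and does not separate ``kill diagonals'' from ``fix a direction''. After removing crossings it works with $\xi(\cgF)$, the minimum number of parts whose connected components are each directed, and bounds $\xi$ via a BFS-style decomposition into $m$-fold layers (Lemma~\ref{lem:bfs}, Corollary~\ref{cor:bfs}) together with a double induction (Lemmas~\ref{lem:geometry-inner} and~\ref{lem:geometry-outer}). The heart of that induction is not order-theoretic: to each frame one attaches $x$-semimonotone \emph{trace curves} anchored on a common vertical line and invokes Suk's $\chi$-boundedness theorem for such curve families (Theorem~\ref{thm:suk}), with Asplund--Gr\"unbaum (Theorem~\ref{thm:rect}) handling the outermost frames. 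In short, the direction-consistency you are after is obtained only by passing through a nontrivial external $\chi$-boundedness result for curves; no elementary bookkeeping of the kind you sketch is known to suffice.
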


\begin{lemma}
\label{lem:frame-clean}
Every triangle-free family of frames can be partitioned into two subfamilies so that every connected component of either subfamily is a clean family of frames.
\end{lemma}

\noindent
The proofs of both lemmas are technical and are postponed to Section \ref{sec:geometry}.

The next step is a more abstract description of the structure of intersection graphs of clean directed families of frames in terms of intervals in $\setR$.
The family of all closed intervals in $\setR$ is denoted by $\cgI$.
The left and right endpoints of an interval $I$ are denoted by $\leftside(I)$ and $\rightside(I)$, respectively.
Two intervals \emph{overlap} if and only if they intersect but neither contains the other.
The \emph{overlap graph} defined on a family of intervals has an edge for each pair of overlapping intervals.
Again, we can assume without loss of generality that the intervals representing an overlap graph are in general position, which means that all their endpoints are distinct.
With this assumption, intervals $I_1$ and $I_2$ overlap if and only if $\leftside(I_1)<\leftside(I_2)<\rightside(I_1)<\rightside(I_2)$ or $\leftside(I_2)<\leftside(I_1)<\rightside(I_2)<\rightside(I_1)$.

A \emph{rooted forest} is a forest in which every tree has one vertex chosen as its root.
Let $G$ be a graph, $M$ be a rooted forest with $V(M)=V(G)$, and $\mu\colon V(G)\to\cgI$.
For $u,v\in V(G)$, we write $u\prec v$ if $u\neq v$ and $u$ is an ancestor of $v$ in $M$.
The graph $G$ is an \emph{overlap game graph} with \emph{meta-forest} $M$ and \emph{representation} $\mu$ if the following conditions are satisfied:
\begin{enumeratex}{G}
\item\label{def:increasing} $\leftside(\mu(x))<\leftside(\mu(y))$ whenever $x\prec y$;
\item\label{def:game-graph} $xy\in E(G)$ if and only if $x\prec y$ or $y\prec x$ and $\mu(x)$ and $\mu(y)$ overlap;
\item\label{def:forbidden-structure} there are no $x$, $y$, $z$ such that $x\prec y\prec z$, $\mu(x)$ and $\mu(y)$ overlap, and $\mu(z)\subset\mu(x)\cap\mu(y)$.
\end{enumeratex}
In particular, for every set of vertices $Q$ on a common path from a root to a leaf in $M$, the induced subgraph $G[Q]$ of $G$ is isomorphic to the overlap graph of the family of intervals $\{\mu(x)\colon x\in Q\}$.
Here, once again, we assume without loss of generality that the intervals $\mu(x)$ for $x\in Q$ are in general position.
Since two vertices $x,y\in V(G)$ that are not in the ancestor-descendant relation of $M$ ($x\not\prec y$ and $y\not\prec x$) are never adjacent in $G$, the graph $G$ is the union of the overlap graphs $G[Q]$ over the vertex sets $Q$ of all paths from a root to a leaf in $M$.

\begin{lemma}
\label{lem:relation}
A graph is an overlap game graph if and only if it is isomorphic to the intersection graph of a clean directed family of frames.\footnotemark
\end{lemma}

\addtocounter{footnote}{-1}  
\addtocounter{Hfootnote}{-1} 

\begin{lemma}
\label{lem:game-graph-coloring}
Every triangle-free overlap game graph has chromatic number\/ $O(\log\log n)$.\footnotemark
\end{lemma}

\footnotetext{Lemmas \ref{lem:relation} and \ref{lem:game-graph-coloring} remain valid if we drop the cleanliness condition on the families of frames considered and the condition \ref{def:forbidden-structure} in the definition of an overlap game graph.
However, the condition \ref{def:forbidden-structure} is necessary for the proof of Lemma \ref{lem:game-graph-coloring}.
To derive the analogue of Lemma \ref{lem:game-graph-coloring} without the condition \ref{def:forbidden-structure}, we would first apply the analogue of Lemma \ref{lem:relation} to turn the graph into a directed family of frames, then apply Lemma \ref{lem:frame-clean} to partition it into two subfamilies with clean connected components, and then we would apply Lemma \ref{lem:relation} again to these connected components going back to overlap game graphs, for which we would apply the original Lemma \ref{lem:game-graph-coloring}.}

Now, Theorem \ref{thm:loglog} follows from Lemmas \ref{lem:frame-geometry} and \ref{lem:frame-clean}, the fact that each connected component of a graph can be colored separately, and Lemmas \ref{lem:relation} and \ref{lem:game-graph-coloring}.
Theorem \ref{thm:construction} follows from Lemma \ref{lem:relation} and the following result, implicit in \cite{PKK+13}, which asserts that the bound in Lemma \ref{lem:game-graph-coloring} is tight.

\begin{lemma}
\label{lem:construction}
There are triangle-free overlap game graphs with chromatic number\/ $\Omega(\log\log n)$.
\end{lemma}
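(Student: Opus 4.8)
The plan is to derive the required graphs from the construction of Pawlik et al.\ \cite{PKKLMTW-shapes} together with the structural results already stated, rather than to rebuild the combinatorics from scratch. Since a frame is a compact arc-connected set that is not an axis-aligned rectangle, the construction of \cite{PKKLMTW-shapes}---which applies to any class of arc-connected compact sets closed under horizontal scaling, vertical scaling, and translation other than axis-aligned rectangles---specializes to frames and yields, for arbitrarily large values, a triangle-free family of frames $\cgF$ with $|\cgF|=n$ and $\chi(\cgF)=\Omega(\log\log n)$. My goal is to convert such an $\cgF$ into a triangle-free overlap game graph of the same order and chromatic number, and here Lemma~\ref{lem:relation} is the natural bridge: it suffices to produce a \emph{clean directed} triangle-free family of frames with $N\leq n$ frames and $\chi=\Omega(\log\log n)$, for then its intersection graph is the desired overlap game graph.

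The work is to extract a clean directed subfamily without destroying the chromatic number. First I would apply Lemma~\ref{lem:frame-geometry} with $k=2$ (triangle-free means $\omega(\cgF)\leq 2$), partitioning $\cgF$ into a constant number $c=c(2)$ of directed subfamilies. Coloring each part from a private palette shows $\chi(\cgF)\leq\sum_i\chi(\cgF_i)$, so some part $\cgF'$ satisfies $\chi(\cgF')\geq\chi(\cgF)/c=\Omega(\log\log n)$; being a subfamily of $\cgF$, it is directed and triangle-free. Then I would apply Lemma~\ref{lem:frame-clean} to $\cgF'$, splitting it into two clean subfamilies, each still directed (condition \ref{cond:frame-directed} constrains only pairs of frames and is therefore inherited by subfamilies) and triangle-free; by the same palette argument one of them, $\cgF''$, has $\chi(\cgF'')\geq\chi(\cgF')/2=\Omega(\log\log n)$. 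Thus $\cgF''$ is clean, directed, triangle-free, has at most $n$ frames, and has chromatic number $\Omega(\log\log n)$.

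Finally, Lemma~\ref{lem:relation} turns $\cgF''$ into an overlap game graph: the intersection graph $G$ of $\cgF''$ is an overlap game graph, it is triangle-free because $\cgF''$ is, and it has some number $N\leq n$ of vertices. Its chromatic number is $\chi(\cgF'')=\Omega(\log\log n)$, and since $N\leq n$ gives $\log\log N\leq\log\log n$, we get $\chi(G)=\Omega(\log\log n)\geq\Omega(\log\log N)$. Hence $G$ is a triangle-free overlap game graph on $N$ vertices with $\chi(G)=\Omega(\log\log N)$, which is the assertion of the lemma.

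Each step is individually routine once \cite{PKKLMTW-shapes} is granted, so the only genuine bookkeeping is to confirm that dividing the chromatic number by the constants $c$ and $2$ leaves $\Omega(\log\log n)$ intact (it does, because the vertex count only decreases) and that cleaning preserves directedness (immediate, as \ref{cond:frame-directed} is hereditary). I do not expect a serious obstacle on this route; the conceptual point is simply that Lemma~\ref{lem:relation} lets us import the entire lower-bound construction through the frame picture. The more delicate alternative, matching the phrase that the result is ``implicit'' in \cite{PKKLMTW-shapes}, is to check that the recursive construction there already outputs a clean directed family---equivalently, that it directly encodes an adversary strategy for on-line interval overlap coloring as an overlap game graph. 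The hard part of that route is to read off conditions \ref{def:increasing}--\ref{def:forbidden-structure} from the recursion---in particular to maintain the forbidden-structure condition \ref{def:forbidden-structure} together with triangle-freeness at every stage---while keeping the number of vertices doubly exponential in the number of forced colors, which is precisely what produces the $\log\log n$ dependence.
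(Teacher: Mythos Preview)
Your argument is correct, but it takes a genuinely different route from the paper. The paper proves Lemma~\ref{lem:construction} directly and self-containedly: it describes, for each $k$, a recursive Presenter strategy $\Sigma(k,R)$ in the overlap coloring game that forces $k$ colors while presenting at most $2^k-1$ intervals and branching at most $k$-fold at each position, so the resulting overlap game graph has $n\leq k^{2^k-1}$ vertices and hence $k=\Omega(\log\log n)$. This is exactly the ``more delicate alternative'' you sketch at the end; the conditions \ref{def:increasing}--\ref{def:forbidden-structure} and triangle-freeness are verified inductively along the recursion. Your main route instead pulls the frame construction from \cite{PKKLMTW-shapes} as a black box and then pushes it through Lemmas~\ref{lem:frame-geometry}, \ref{lem:frame-clean}, and \ref{lem:relation}. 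That is logically sound (none of those lemmas depend on Lemma~\ref{lem:construction}), and the bookkeeping about constants and hereditariness of directedness is right. The cost is that you import the entire reduction apparatus of Section~\ref{sec:geometry}---including Theorems~\ref{thm:rect} and \ref{thm:suk}---merely to establish a lower bound, whereas the paper's direct argument is elementary, fits on a page, and also serves the expository purpose of illustrating why these objects are called overlap \emph{game} graphs.
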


We prove Lemma \ref{lem:relation} in Section \ref{sec:relation} and Lemma \ref{lem:game-graph-coloring} in Section \ref{sec:coloring}.
In Section \ref{sec:restricted}, we mostly recall some known results about on-line colorings of forests, which motivate the proof of Lemma \ref{lem:construction} from \cite{PKK+13} and the coloring algorithm presented in Section \ref{sec:coloring}.
We also include a sketch of the proof of Lemma \ref{lem:construction}, for completeness and to illustrate the idea behind overlap game graphs.

We complete this outline by explaining the meaning of the word ``game'' in the term ``overlap game graphs''.
Let $k\in\setN$.
Consider the following \emph{overlap game} between two players:\ Presenter, who builds a family of intervals presenting them one by one, and Algorithm, who colors them on-line, that is, each interval is assigned its color right after it is presented and without possibility of changing the color later.
Presenter's moves are restricted by the following rules:
\begin{enumeratex}{I}
\item\label{def:game-i} if an interval $I_2$ is presented after $I_1$, then $\leftside(I_1)<\leftside(I_2)$;
\item\label{def:game-ii} no three intervals $I_1$, $I_2$, $I_3$ such that $I_1$ and $I_2$ overlap and $I_3\subset I_1\cap I_2$ are ever presented.
\end{enumeratex}
The coloring constructed by Algorithm has to be a proper coloring of the overlap graph defined by the intervals presented.
Presenter aims to force Algorithm to use more than $k$ colors, while Algorithm tries to keep using at most $k$ colors.

Every finite strategy of Presenter gives rise to an overlap game graph $G$ with meta-forest $M$ and representation $\mu$ such that the root-to-leaf paths in $M$ correspond to the intervals presented in the possible scenarios of the strategy.
Specifically, each root $r$ of $M$ corresponds to an interval $\mu(r)$ that can be played in Presenter's first move, and each child of a vertex $x$ of $M$ corresponds to an interval that Presenter can play right after $\mu(x)$ at the position represented by $x$.
Conversely, an overlap game graph $G$ with meta-forest $M$ and representation $\mu$ can be interpreted as a non-deterministic strategy of Presenter, as follows.
Presenter starts with an arbitrarily chosen root $r$ of $M$ presenting $\mu(r)$, and then, in each move from a position $u$ in $M$, follows to an arbitrarily chosen child $v$ of $u$ presenting $\mu(v)$.
The key observation is that Algorithm has a strategy to use at most $k$ colors against such a strategy of Presenter if and only if $\chi(G)\leq k$.
The proof of Lemma \ref{lem:construction} constructs a strategy of Presenter forcing Algorithm to use more than $k$ colors by presenting at most $2^k$ intervals, while the proof of Lemma \ref{lem:game-graph-coloring} essentially shows that every such strategy needs to have a double exponential number of scenarios.

The presented ``on-line'' interpretation of overlap game graphs is exploited in the proof of Lemma \ref{lem:construction} presented in the next section.
It may also provide a useful insight into our arguments in Section \ref{sec:coloring}, which are formulated using the static definition of an overlap game graph.

\section{Restricted families of intervals}
\label{sec:restricted}

Let $\cgJ$ be a family of intervals in $\setR$ in general position.
We say that $\cgJ$ is \emph{restricted} if the following condition is satisfied:
\begin{enumeratey}{R}
\item\label{def:restricted} for any three intervals $I_1,I_2,I_3\in\cgJ$, if $\leftside(I_1)<\leftside(I_2)<\leftside(I_3)$ and $I_1$ and $I_2$ overlap, then $\rightside(I_1)<\leftside(I_3)$.
\end{enumeratey}
That is, restricted families exclude the three configurations of intervals presented in Figure \ref{fig:restricted}.

\begin{figure}[t]
\centering
\subfigure[\unskip]{\label{fig:restricted-clean}\begin{tikzpicture}[scale=0.35,xscale=1.2]
  \draw[|-|] (0,0)--(6,0);
  \draw[|-|] (1,1)--(7,1);
  \draw[|-|] (2,2)--(5,2);
\end{tikzpicture}}
\hspace{1cm}
\subfigure[\unskip]{\label{fig:restricted-secondary}\begin{tikzpicture}[scale=0.35,xscale=1.2]
  \draw[|-|] (0,0)--(5,0);
  \draw[|-|] (1,1)--(7,1);
  \draw[|-|] (2,2)--(6,2);
\end{tikzpicture}}
\hspace{1cm}
\subfigure[\unskip]{\label{fig:restricted-triangle}\begin{tikzpicture}[scale=0.35,xscale=1.2]
  \draw[|-|] (0,0)--(5,0);
  \draw[|-|] (1,1)--(6,1);
  \draw[|-|] (2,2)--(7,2);
\end{tikzpicture}}
\caption{Three configurations of intervals excluded in restricted families}
\label{fig:restricted}
\end{figure}
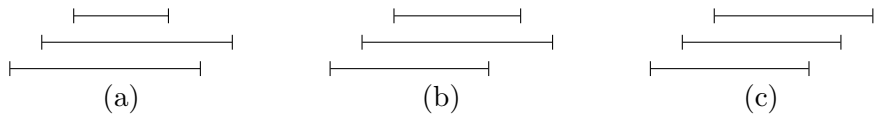

For $I_1,I_2\in\cgJ$, we write $I_1\edgeto I_2$ if $\leftside(I_1)<\leftside(I_2)<\rightside(I_1)<\rightside(I_2)$.
That is, $\edgeto$ is the orientation of the edges of the overlap graph of $\cgJ$ from left to right.

\begin{lemma}
\label{lem:restricted}
Let\/ $\cgJ$ be a restricted family of intervals.
\begin{enumeratea}
\item\label{lem:forest} For every\/ $I_1\in\cgJ$, there is at most one\/ $I_2\in\cgJ$ such that\/ $I_1\edgeto I_2$.
In particular, the overlap graph of\/ $\cgJ$ is a forest.
\item\label{lem:interlace} There are no\/ $I_1,I_2,I_3,I_4\in\cgJ$ such that\/ $\leftside(I_1)<\leftside(I_2)<\leftside(I_3)<\leftside(I_4)$, $I_1\edgeto I_3$, and\/ $I_2\edgeto I_4$.
\end{enumeratea}
\end{lemma}

\begin{proof}
Let $I_1,I_2,I_3\in\cgJ$, $\leftside(I_1)<\leftside(I_2)<\leftside(I_3)$, and $I_1\edgeto I_2$.
It follows from \ref{def:restricted} that $\rightside(I_1)<\leftside(I_3)$.
Hence the intervals $I_1$ and $I_3$ are disjoint, so $I_1\not\edgeto I_3$.
This completes the proof of \ref{lem:forest}.

For the proof of \ref{lem:interlace}, let $I_1,I_2,I_3,I_4\in\cgJ$, $\leftside(I_1)<\leftside(I_2)<\leftside(I_3)<\leftside(I_4)$, and $I_1\edgeto I_3$.
It follows from \ref{lem:forest} that $I_1\not\edgeto I_2\not\edgeto I_3$.
This and $I_1\edgeto I_3$ yields $\leftside(I_1)<\leftside(I_2)<\rightside(I_2)<\leftside(I_3)<\rightside(I_1)<\rightside(I_3)$.
Therefore, since $\leftside(I_3)<\leftside(I_4)$, the intervals $I_2$ and $I_4$ are disjoint, so $I_2\not\edgeto I_4$.
\end{proof}

The \emph{restricted overlap game} is a variant of the overlap game in which Presenter is required to present a restricted family of intervals.
Note that the condition that the family is restricted implies the condition \ref{def:game-ii} of the definition of the overlap game.

By Lemma \ref{lem:restricted} \ref{lem:forest}, the overlap graphs of families of intervals presented in the restricted overlap game are forests.
A well-known result in the area of on-line graph coloring algorithms asserts that Presenter has a strategy to force Algorithm to use more than $k$ colors on a forest with at most $2^k$ vertices.
Such a strategy was first found by Bean \cite{Bea76} and later rediscovered by Gy\'arf\'as and Lehel \cite{GL88}.
Erlebach and Fiala \cite{EF02} described a particular version of such a strategy, which can be carried out on forests represented as intersection graphs of discs or axis-parallel squares in the plane.
Pawlik et al.\ \cite{PKK+13} implemented the same strategy on forests represented as overlap graphs of intervals presented in the restricted overlap game.

We present this strategy first for abstract forests, and then in the restricted overlap game.
Next, we use it to prove Lemma \ref{lem:construction}.
Finally, we show that this strategy is optimal, that is, any forest with fewer than $2^k$ vertices can be properly colored on-line using at most $k$ colors.

\begin{proposition}[Bean \cite{Bea76}; Gy\'arf\'as, Lehel \cite{GL88}]
\label{prop:forest-strategy}
Presenter has a strategy to force Algorithm to use more than\/ $k$ colors by presenting a forest with at most\/ $2^k$ vertices.
\end{proposition}

\begin{proof}
For a tree $T$ of a forest presented on-line, let $r(T)$ denote the vertex of $T$ presented last.
The strategy constructs, by induction, a forest of trees $T_1,\ldots,T_m$ with at most $2^k-1$ vertices in total such that Algorithm is forced to use at least $k$ colors on the vertices $r(T_1),\ldots,r(T_m)$.

\begin{figure}[t]
\centering
\begin{tikzpicture}[xscale=0.7,>=latex,label distance=2pt]
  \node (l1) at (4,-1) {$r(T_1),\ldots,r(T_m)$};
  \node (l2) at (12,-1) {$r(T'_1),\ldots,r(T'_{\smash{m'}})$};
  \tikzstyle{every node}=[circle,minimum size=3pt,inner sep=0pt,draw,fill];
  \draw[double distance=8pt,line cap=round] (0,0)--(1,0) (2,0)--(3,0) (4,0)--(5,0) (6,0)--(7,0);
  \node (a) at (1,0) {};
  \node (b) at (3,0) {};
  \node (c) at (5,0) {};
  \node (d) at (7,0) {};
  \draw[double distance=8pt,line cap=round] (8,0)--(9,0) (10,0)--(11,0) (12,0)--(13,0) (14,0)--(15,0);
  \node (e) at (9,0) {};
  \node (f) at (11,0) {};
  \node (g) at (13,0) {};
  \node (h) at (15,0) {};
  \node[label=below:$v$] (i) at (16,0) {};
  \path (i) edge[bend right=30] (e) edge[bend right=30] (f) edge[bend right=30] (g) edge[bend right=30] (h);
  \path[->,dotted] (l1) edge (a) edge (b) edge (c) edge (d);
  \path[->,dotted] (l2) edge (e) edge (f) edge (g) edge (h);
\end{tikzpicture}
\caption{Illustration for the proof of Proposition \ref{prop:forest-strategy}}
\label{fig:forest-strategy}
\end{figure}
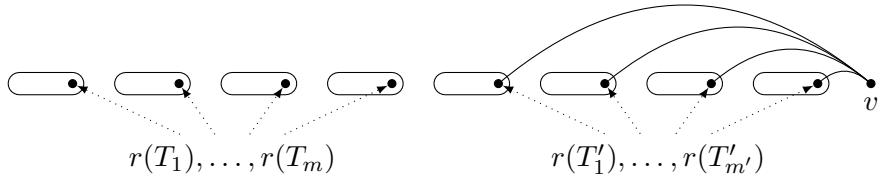

The case $k=1$ is trivial---it is enough to present a single vertex.
To obtain a strategy for $k\geq 2$, we apply the strategy for $k-1$ twice, building two forests of trees $T_1,\ldots,T_m$ and $T'_1,\ldots,T'_{\smash{m'}}$ with at most $2^k-2$ vertices in total such that Algorithm is forced to use at least $k-1$ colors on $r(T_1),\ldots,r(T_m)$ and at least $k-1$ colors on $r(T'_1),\ldots,r(T'_{\smash{m'}})$.
If the two sets of colors are different, then we have already forced Algorithm to use at least $k$ colors on $r(T_1),\ldots,r(T_m),r(T'_1),\ldots,r(T'_{\smash{m'}})$.
Otherwise, we present one more vertex $v$ and connect it to $r(T'_1),\ldots,r(T'_{\smash{m'}})$, thus merging $T'_1,\ldots,T'_{\smash{m'}}$ into one tree $T_{m+1}$ with $r(T_{m+1})=v$.
Algorithm has to color $v$ with a color different from those used on $r(T'_1),\ldots,r(T'_{\smash{m'}})$.
Hence it has been forced to use at least $k$ colors on $r(T_1),\ldots,r(T_{m+1})$.
See Figure \ref{fig:forest-strategy} for an illustration.

In the final step, after playing the strategy for $k$ claimed above, we present one more vertex $v$ connected to $r(T_1),\ldots,r(T_m)$, on which a $(k+1)$st color must be used.
This way, we have forced the use of more than $k$ colors on a tree with at most $2^k$ vertices.
\end{proof}

\begin{proposition}[Pawlik et al.\ \cite{PKK+13}]
\label{prop:overlap-strategy}
Presenter has a strategy to force Algorithm to use more than\/ $k$ colors on a family of at most\/ $2^k$ intervals in the restricted overlap game.
\end{proposition}

\begin{proof}
A \emph{tree} of a restricted family of intervals $\cgJ$ is a subfamily $\cgT\subset\cgJ$ that is a connected component of the overlap graph of $\cgJ$.
For a tree $\cgT$ of a family of intervals presented in the restricted overlap game, let $R(\cgT)$ denote the interval in $\cgT$ presented last.

The strategy is an adaptation of the strategy described in the proof of Proposition \ref{prop:forest-strategy}.
It constructs, by induction, a restricted family of intervals $\cgJ$ with trees $\cgT_1,\ldots,\cgT_m$ of total size at most $2^k-1$ such that
\begin{enumeratei}
\item\label{item:strategy-i} $R(\cgT_1)\supset\cdots\supset R(\cgT_m)$,
\item\label{item:strategy-ii} every interval in $\cgJ\setminus\{R(\cgT_1),\ldots,R(\cgT_m)\}$ lies entirely to the left of $\rightside(R(\cgT_m))$,
\item\label{item:strategy-iii} Algorithm is forced to use at least $k$ colors on the intervals $R(\cgT_1),\ldots,R(\cgT_m)$.
\end{enumeratei}

\begin{figure}[t]
\centering
\begin{tikzpicture}[scale=0.35,xscale=1.2]
  \draw[|-|] (-10,0)--(10,0);
  \draw[|-|] (-9,1)--(9,1);
  \draw[|-|] (-8,2)--(8,2);
  \draw[|-|] (-2,3)--(6,3);
  \draw[|-|] (-1,4)--(5,4);
  \draw[|-|] (0,5)--(4,5);
  \draw[|-|] (3,6)--(7,6);
  \draw[dotted] (11,0)--(10,0);
  \draw[dotted] (11,2)--(8,2);
  \draw[dotted] (11,3)--(6,3);
  \draw[dotted] (11,5)--(4,5);
  \begin{scope}[xshift=10pt]
  \draw[decorate,decoration=brace] (11,2)--(11,0);
  \node[inner sep=6pt,anchor=west] at (11,1) {$R(\cgT_1),\ldots,R(\cgT_m)$};
  \draw[decorate,decoration=brace] (11,5)--(11,3);
  \node[inner sep=6pt,anchor=west] at (11,4) {$R(\cgT'_1),\ldots,R(\cgT'_{\smash{m'}})$};
  \end{scope}
  \node[inner sep=2pt,anchor=south] at (5,6) {$J$};
  \draw[dashed] (-5,-0.4)--(-5,5.6);
  \draw[dashed] (1.5,-0.4)--(1.5,5.6);
  \node[inner sep=4pt,anchor=south] at (-5,5.4) {$x$};
  \node[inner sep=4pt,anchor=south] at (1.5,5.4) {$x'$};
\end{tikzpicture}
\caption{Illustration for the proof of Proposition \ref{prop:overlap-strategy}}
\label{fig:overlap-strategy}
\end{figure}
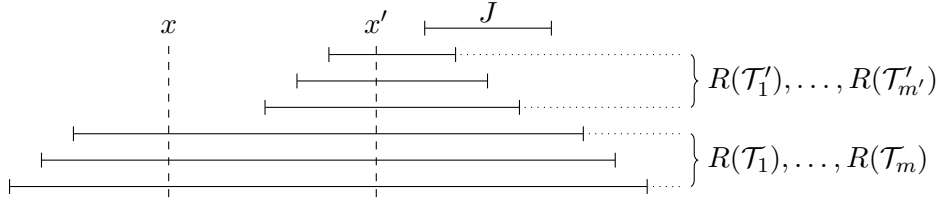

The case $k=1$ is trivial---it is enough to present a single interval.
To obtain a strategy for $k\geq 2$, we first apply the strategy for $k-1$, building a restricted family of intervals $\cgJ$ with trees $\cgT_1,\ldots,\cgT_m$ of total size at most $2^{k-1}-1$ such that \ref{item:strategy-i}--\ref{item:strategy-iii} are satisfied.
Let $x$ be the maximum of $\leftside(R(\cgT_m))$ and the right endpoints of all intervals in $\cgJ\setminus\{R(\cgT_1),\ldots,R(\cgT_m)\}$.
We apply the strategy for $k-1$ again, this time playing entirely inside the interval $(x,\rightside(R(\cgT_m)))$, building a restricted family of intervals $\cgJ'$ with trees $\cgT'_1,\ldots,\cgT'_{\smash{m'}}$ of total size at most $2^{k-1}-1$ such that \ref{item:strategy-i}--\ref{item:strategy-iii} are satisfied.
Let $x'$ be the maximum of $\leftside(R(\cgT'_{\smash{m'}}))$ and the right endpoints of all intervals in $\cgJ'\setminus\{R(\cgT'_1),\ldots,R(\cgT'_{\smash{m'}})\}$.
It is clear that the family $\cgJ\cup\cgJ'$ with trees $\cgT_1,\ldots,\cgT_m,\cgT'_1,\ldots,\cgT'_{\smash{m'}}$ satisfies \ref{item:strategy-i} and \ref{item:strategy-ii}.
If the two sets of at least $k-1$ colors used on $R(\cgT_1),\ldots,R(\cgT_m)$ and $R(\cgT'_1,\ldots,\cgT'_{\smash{m'}})$ differ, then we have already forced Algorithm to use at least $k$ colors on $R(\cgT'_1),\ldots,R(\cgT_m),R(\cgT'_1),\ldots,R(\cgT'_{\smash{m'}})$, so \ref{item:strategy-iii} is also satisfied.
Otherwise, we present one more interval $J$ with $\leftside(J)\in(x',\rightside(R(\cgT'_{\smash{m'}})))$ and $\rightside(J)\in(\rightside(R(\cgT'_1)),\rightside(R(\cgT_m)))$.
This way, $J$ overlaps the intervals $R(\cgT'_1,\ldots,\cgT'_{\smash{m'}})$ and no other intervals, and thus merges $\cgT'_1,\ldots,\cgT'_{\smash{m'}}$ into one tree $\cgT_{m+1}$ with $R(\cgT_{m+1})=J$.
Again, the family $\cgJ\cup\cgJ'\cup\{J\}$ with trees $\cgT_1,\ldots,\cgT_{m+1}$ satisfies \ref{item:strategy-i} and \ref{item:strategy-ii}.
Moreover, Algorithm has to color $J$ with a color different from those used on $R(\cgT'_1),\ldots,R(\cgT'_{\smash{m'}})$.
Hence it has been forced to use at least $k$ colors on $R(\cgT_1),\ldots,R(\cgT_{m+1})$, so \ref{item:strategy-iii} is also satisfied.
See Figure \ref{fig:overlap-strategy} for an illustration.

In the final step, after playing the strategy for $k$ claimed above, we present one more interval $J$ such that $\leftside(J)\in(x,\rightside(R(\cgT_m)))$ and $\rightside(J)>\rightside(R(\cgT_1))$, where $x$ is the maximum of $\leftside(R(\cgT_m))$ and the right endpoints of all intervals in $\cgJ\setminus\{R(\cgT_1),\ldots,R(\cgT_m)\}$.
It follows that a $(k+1)$st color must be used on $J$.
This way, we have forced the use of more than $k$ colors on a family of at most $2^k$ intervals in the restricted overlap game.
\end{proof}

\begin{proof}[Proof of Lemma \ref{lem:construction} (sketch)]
According to what has been told in Section \ref{sec:overview}, any strategy of Presenter in the overlap game gives rise to an overlap game graph encoding this strategy.
Since the overlap graph of the family $\cgJ$ presented by the strategy described in the proof of Proposition \ref{prop:overlap-strategy} is triangle-free, the overlap game graph $G$ obtained from this strategy is triangle-free as well.
Since Algorithm is forced to use at least $k$ colors on $\cgJ$, the graph $G$ has chromatic number at least $k$ (actually, its chromatic number is exactly $k$).
One can calculate that the number of vertices of $G$ is less than $\frac{3}{2}\cdot\smash[t]{2^{2^{k-2}}}$.
See \cite{PKK+13} for more details.
\end{proof}

To show that any forest with fewer than $2^k$ vertices can be properly colored on-line using at most $k$ colors, we use the algorithm called \emph{First-fit}.
It colors each vertex with the least positive integer that has not been used on any neighbor presented before.

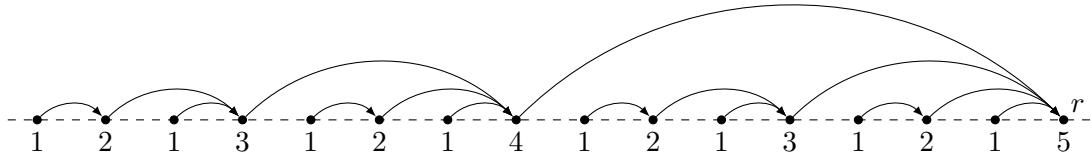
\begin{figure}[t]
\centering
\begin{tikzpicture}[scale=0.9,>=latex,label distance=2pt]
  \draw[dashed] (-0.43,0)--(15.43,0);
  \tikzstyle{every node}=[circle,minimum size=3pt,inner sep=0pt,draw,fill];
  \node[label=below:$1$] (a) at (0,0) {};
  \node[label=below:$2$] (b) at (1,0) {};
  \node[label=below:$1$] (c) at (2,0) {};
  \node[label=below:$3$] (d) at (3,0) {};
  \node[label=below:$1$] (e) at (4,0) {};
  \node[label=below:$2$] (f) at (5,0) {};
  \node[label=below:$1$] (g) at (6,0) {};
  \node[label=below:$4$] (h) at (7,0) {};
  \node[label=below:$1$] (i) at (8,0) {};
  \node[label=below:$2$] (j) at (9,0) {};
  \node[label=below:$1$] (k) at (10,0) {};
  \node[label=below:$3$] (l) at (11,0) {};
  \node[label=below:$1$] (m) at (12,0) {};
  \node[label=below:$2$] (n) at (13,0) {};
  \node[label=below:$1$] (o) at (14,0) {};
  \node[label=below:$5$,label=above right:$r$] (p) at (15,0) {};
  \path (a) edge[->,bend left=45] (b);
  \path (b) edge[->,bend left=45] (d);
  \path (c) edge[->,bend left=45] (d);
  \path (d) edge[->,bend left=45] (h);
  \path (e) edge[->,bend left=45] (f);
  \path (f) edge[->,bend left=45] (h);
  \path (g) edge[->,bend left=45] (h);
  \path (h) edge[->,bend left=45] (p);
  \path (i) edge[->,bend left=45] (j);
  \path (j) edge[->,bend left=45] (l);
  \path (k) edge[->,bend left=45] (l);
  \path (l) edge[->,bend left=45] (p);
  \path (m) edge[->,bend left=45] (n);
  \path (n) edge[->,bend left=45] (p);
  \path (o) edge[->,bend left=45] (p);
\end{tikzpicture}
\caption{Example of a tree $T$ with $f(r)=5$}
\label{fig:forcing-tree}
\end{figure}

\begin{proposition}[folklore]
\label{prop:first-fit}
First-fit uses at most\/ $k$ colors on any forest with fewer than\/ $2^k$ vertices presented on-line in any order.
\end{proposition}

\begin{proof}
Let $F$ be a forest with fewer than $2^k$ vertices presented on-line.
For $u,v\in V(F)$, let $u\edgeto v$ denote that $uv\in E(F)$ and $u$ has been presented before $v$.
Let $f(u)$ denote the color chosen by First-fit for each vertex $u$.
For every $v\in V(F)$, the colors $1,\ldots,f(v)-1$ have been chosen for some vertices $u$ with $u\edgeto v$.
Let $r$ be a vertex with maximum color, and let $T$ be a minimal subforest of $F$ that satisfies the following conditions (see Figure \ref{fig:forcing-tree}):
\begin{itemize}
\item $T$ contains the vertex $r$,
\item for any $v\in V(T)$, there are $u_1,\ldots,u_{f(v)-1}\in V(T)$ such that $u_j\edgeto v$ and $f(u_j)=j$ for $1\leq j\leq f(v)-1$.
\end{itemize}
It follows that there is a directed path in $T$ from every vertex to $r$, so $T$ is actually a tree.
Moreover, for every $u\in V(T)\setminus\{r\}$, there is a unique $v\in V(T)$ such that $u\edgeto v$.
Indeed, if there were two such vertices $v_1$ and $v_2$, then there would be two different paths in $T$ from $u$ to $r$, which cannot exist in a tree.
Let $n_j$ denote the number of vertices of $T$ with color $j$.
It follows that $n_{f(r)}=1$ and $n_j=n_{j+1}+\cdots+n_{f(r)}$ for $1\leq j\leq f(r)-1$.
Easy calculation yields $n_1+\cdots+n_{f(r)}=2^{f(r)-1}$.
Since $T$ has fewer than $2^k$ vertices, we have $f(r)\leq k$, which means that First-fit has used at most $k$ colors on $F$.
\end{proof}

\section{Coloring triangle-free overlap game graphs}
\label{sec:coloring}

For the purpose of this entire section, let $G$ be an $n$-vertex triangle-free overlap game graph with meta-forest $M$ and representation $\mu$.
Our goal is to prove that $G$ has chromatic number $O(\log\log n)$.
Since different components of $M$ are not connected by edges of $G$, they can be colored independently using the same set of colors.
Thus it is enough to consider each component of $M$ separately, and therefore we can assume without loss of generality that $M$ is a single tree.

For any set $S$ of vertices lying on a common root-to-leaf path of $M$, the family of intervals $\{\mu(v)\colon v\in S\}$ excludes the two configurations of intervals presented in Figures \ref{fig:restricted-clean}, by the condition \ref{def:forbidden-structure} on $G$, and \ref{fig:restricted-triangle}, by the assumption that $G$ is triangle-free, but it may contain the configuration in Figure \ref{fig:restricted-secondary}.
Our first goal is to reduce the problem to the case that all three configurations in Figure \ref{fig:restricted} are excluded.

We classify each vertex of $G$ as either \emph{primary} or \emph{secondary} according to the following inductive rule.
The root of $M$ is primary.
Now, let $x$ be a vertex other than the root of $M$, and suppose that all vertices $v$ with $v\prec x$ have been already classified.
If there are primary vertices $u$ and $v$ with $u\prec v\prec x$ such that $\mu(u)$, $\mu(v)$ and $\mu(x)$ form the configuration in Figure \ref{fig:restricted-secondary}, that is, $\mu(v)\supset\mu(x)$ and $\mu(u)$ overlaps both $\mu(v)$ and $\mu(x)$, then $x$ is secondary.
Otherwise, $x$ is primary.
It clearly follows that for any set $P$ of primary vertices lying on a common root-to-leaf path of $M$, the family of intervals $\{\mu(v)\colon v\in P\}$ excludes all three configurations in Figure \ref{fig:restricted}, that is, it is restricted.

For every primary vertex $v$, let $S(v)$ be the set consisting of $v$ and all vertices $x$ with $v\prec x$ for which there is a primary vertex $u$ with $u\prec v$ such that $\mu(u)$, $\mu(v)$ and $\mu(x)$ form the configuration in Figure \ref{fig:restricted-secondary}.
Hence every primary or secondary vertex belongs to some $S(v)$.

\begin{lemma}
\label{lem:S}
For every primary vertex\/ $v$, the set\/ $S(v)$ consists of those and only those vertices\/ $x$ for which\/ $v$ is the last primary vertex on the path from the root to\/ $x$ in\/ $M$.
Moreover, every\/ $S(v)$ is an independent set in\/ $G$.
\end{lemma}

\begin{proof}
Let $v$ be a primary vertex, and let $y\in S(v)$.
It follows that there is a primary vertex $u$ with $u\prec v\prec y$ such that $\mu(u)$, $\mu(v)$ and $\mu(y)$ form the configuration in Figure \ref{fig:restricted-secondary}.
Let $x$ be any vertex with $v\prec x\prec y$.
We claim that $\mu(v)\supset\mu(x)\supset\mu(y)$ (see Figure \ref{fig:primary}).
Indeed,
\begin{itemize}
\item if $\rightside(\mu(x))<\rightside(\mu(u))$, then $\mu(u)$, $\mu(v)$ and $\mu(x)$ form the configuration in Figure \ref{fig:restricted-clean}, which is excluded by the condition \ref{def:forbidden-structure} on $G$;
\item if $\rightside(\mu(u))<\rightside(\mu(x))<\rightside(\mu(y))$, then $\mu(u)$, $\mu(x)$ and $\mu(y)$ form the configuration in Figure \ref{fig:restricted-triangle}, which contradicts the assumption that $G$ is triangle-free;
\item if $\rightside(\mu(x))>\rightside(\mu(v))$, then $\mu(u)$, $\mu(v)$ and $\mu(x)$ form the configuration in Figure \ref{fig:restricted-triangle}, which again contradicts the assumption that $G$ is triangle-free;
\end{itemize}
hence $\rightside(\mu(y))<\rightside(\mu(x))<\rightside(\mu(v))$, that is, $\mu(v)\supset\mu(x)\supset\mu(y)$.
It follows that $\mu(u)$, $\mu(v)$ and $\mu(x)$ also form the configuration in Figure \ref{fig:restricted-secondary}, so $x$ is secondary and $x\in S(v)$.
This yields the first statement of the lemma.
It also follows that $\mu(x)\supset\mu(y)$ whenever $x,y\in S(v)$ and $x\prec y$, which proves that $S(v)$ is an independent set in $G$.
\end{proof}

\begin{figure}[t]
\centering
\begin{minipage}[b]{0.45\linewidth}
\centering
\begin{tikzpicture}[scale=0.3,yscale=1.8]
  \draw[|-|] (-8,0)--(2,0);
  \draw[|-|] (-6,0.8)--(8,0.8);
  \draw[|-|,dashed] (-4,1.5)--(10,1.5);
  \draw[|-|] (-4,2)--(6,2);
  \draw[|-|,dashed] (-4,2.5)--(3.5,2.5);
  \draw[|-|,dashed] (-4,3)--(0.5,3);
  \draw[|-|,dashed] (-4,3.5)--(-2,3.5);
  \draw[|-|] (-1,4.2)--(5,4.2);
  \useasboundingbox (current bounding box);
  \node[inner sep=2pt,fill=white] at (-3,0) {$u$};
  \node[inner sep=2pt,fill=white] at (1,0.8) {$v$};
  \node[inner sep=2pt,fill=white] at (1,2) {$x$};
  \node[inner sep=2pt,fill=white] at (2,4.2) {$y$};
\end{tikzpicture}
\caption{Illustration for the proof of Lemma \ref{lem:S}}
\label{fig:primary}
\end{minipage}\hspace{0.5cm}
\begin{minipage}[b]{0.45\linewidth}
\centering
\begin{tikzpicture}[scale=0.3,yscale=1.3]
  \draw[|-|] (-10,0)--(10,0);
  \draw[-|,dashed] (-2.5,1)--(-9,1);
  \draw[-|,dashed] (-2.5,1)--(4,1);
  \draw[|-|] (-8,2)--(8,2);
  \draw[-|,dashed] (-2.5,3)--(-7,3);
  \draw[-|,dashed] (-2.5,3)--(2,3);
  \draw[|-|] (-1,4)--(7,4);
  \draw[-|,dashed] (3,5)--(0,5);
  \draw[-|,dashed] (3,5)--(6,5);
  \draw[-|,dashed] (3,6)--(1,6);
  \draw[-|,dashed] (3,6)--(5,6);
  \useasboundingbox (current bounding box);
  \node[inner sep=2pt,fill=white] at (0,0) {$u_1$};
  \node[inner sep=2pt,fill=white] at (-2.5,1) {$x_1$};
  \node[inner sep=2pt,fill=white] at (0,2) {$u_2$};
  \node[inner sep=2pt,fill=white] at (-2.5,3) {$x_2$};
  \node[inner sep=2pt,fill=white] at (3,4) {$v$};
  \node[inner sep=2pt,fill=white] at (3,5) {$y_1$};
  \node[inner sep=2pt,fill=white] at (3,6) {$y_2$};
\end{tikzpicture}
\caption{Illustration for the proof of Lemma \ref{lem:secondary}}
\label{fig:secondary}
\end{minipage}
\end{figure}

The relation $\prec$ defines an orientation of the edges of $G$: we write $x\edgeto y$ if $xy\in E(G)$ and $x\prec y$.
We write $S(u)\edgeto S(v)$ if $u\prec v$ and there are $x\in S(u)$ and $y\in S(v)$ such that $x\edgeto y$.
We are going to show that any proper $k$-coloring of the primary vertices of $G$ can be transformed into a proper $2k$-coloring of the whole $G$.
This is done with the help of the following lemma.

\begin{lemma}
\label{lem:secondary}
Let\/ $I$ be an independent set of primary vertices in\/ $G$, and let\/ $v\in I$.
There is at most one vertex\/ $u\in I$ such that\/ $S(u)\edgeto S(v)$.
\end{lemma}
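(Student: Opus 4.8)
The plan is to argue by contradiction. Suppose there were two distinct vertices $u_1,u_2\in S$ with $P(u_1)\edgeto P(v)$ and $P(u_2)\edgeto P(v)$. Both satisfy $u_i\precede v$, so $u_1$ and $u_2$ are both ancestors of $v$ and hence comparable in $M$; relabelling, I may assume $u_1\precede u_2\precede v$. Fix witnessing edges: let $x_1\in P(u_1)$, $y_1\in P(v)$ with $x_1\edgeto y_1$, and $x_2\in P(u_2)$, $y_2\in P(v)$ with $x_2\edgeto y_2$. The first thing I would establish is a nesting of the three intervals $\mu(u_1),\mu(u_2),\mu(v)$. Since $S$ is independent and each of the three pairs is comparable, \ref{def:game-graph} forbids any of them from overlapping, so by \ref{def:increasing} each pair is either disjoint or nested with the earlier vertex outside. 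But $\mu(x_1)\subseteq\mu(u_1)$ and $\mu(y_1)\subseteq\mu(v)$ intersect, so $\mu(u_1)\cap\mu(v)\neq\varnothing$; likewise $\mu(u_2)\cap\mu(v)\neq\varnothing$. This rules out disjointness and yields $\mu(v)\subset\mu(u_2)\subset\mu(u_1)$. Writing $\mu(u_1)=[a_1,b_1]$, $\mu(u_2)=[a_2,b_2]$, $\mu(v)=[\alpha,\beta]$, I get $a_1<a_2<\alpha<\beta<b_2<b_1$.

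Next I would pin down $x_1$ and $x_2$ relative to $u_2$ and $v$ using the definition of primary vertices. Since $u_2\precede v\preceq y_1$ and $x_1\precede y_1$, both $x_1$ and $u_2$ are ancestors of $y_1$, hence comparable; moreover $u_2\preceq x_1$ is impossible, since it would place the primary vertex $u_2$, a strict descendant of $u_1$, on the path from $x_1$ to the root, so the first primary vertex on that path could not be $u_1$, contradicting $x_1\in P(u_1)$. Thus $x_1\precede u_2$. The analogous argument (using $v\preceq y_2$ and $x_2\precede y_2$, and that $v$ is a primary strict descendant of $u_2$) gives $x_2\precede v$. Now I read off the endpoints: \ref{def:increasing} with $x_1\precede u_2$ gives $\leftside(\mu(x_1))<a_2$, while $\mu(x_2)\subseteq\mu(u_2)$ and $x_2\precede v$ give $a_2<\leftside(\mu(x_2))<\alpha$; and the overlaps with $y_1,y_2\in P(v)$ force both $\rightside(\mu(x_1))$ and $\rightside(\mu(x_2))$ into $(\alpha,\beta)$. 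In particular each of $\mu(x_1),\mu(x_2)$ overlaps $\mu(v)$, and since $x_1\precede u_2\precede v$ and $x_2\precede v$ I obtain two edges $x_1\edgeto v$ and $x_2\edgeto v$; also $x_1\precede u_2\preceq x_2$ gives $x_1\precede x_2$.

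The proof then closes by comparing the distinct right endpoints $p_1=\rightside(\mu(x_1))$ and $p_2=\rightside(\mu(x_2))$. If $p_1<p_2$, then since $\leftside(\mu(x_1))<\leftside(\mu(x_2))<\alpha<p_1<p_2$ the intervals $\mu(x_1),\mu(x_2)$ overlap, so $x_1\edgeto x_2$; together with $x_1\edgeto v$ and $x_2\edgeto v$ this makes $\{x_1,x_2,v\}$ a triangle, contradicting triangle-freeness of $G$. If instead $p_2<p_1$, then $\mu(x_2)\subset\mu(x_1)$, and since also $\mu(x_2)\subset\mu(u_2)$ while $\mu(x_1)$ and $\mu(u_2)$ overlap (as $\leftside(\mu(x_1))<a_2<p_1<b_2$), the triple $x_1\precede u_2\precede x_2$ with $\mu(x_2)\subset\mu(x_1)\cap\mu(u_2)$ violates \ref{def:forbidden-structure}. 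Either branch is a contradiction, so at most one $u\in S$ satisfies $P(u)\edgeto P(v)$.

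I expect the main obstacle to be the middle step: extracting the comparabilities $x_1\precede u_2$ and $x_2\precede v$ cleanly from the first-primary-ancestor definition, and thereby replacing each witness partner $y_i$ by $v$ itself, so that the concluding dichotomy becomes available. Once the endpoints are ordered as $\leftside(\mu(x_1))<a_2<\leftside(\mu(x_2))<\alpha$ with $p_1,p_2\in(\alpha,\beta)$, the final case split on $p_1$ versus $p_2$—invoking triangle-freeness in one case and \ref{def:forbidden-structure} in the other—should be routine.
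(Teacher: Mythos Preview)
Your argument is correct and follows essentially the same route as the paper's proof: after fixing $u_1\precede u_2\precede v$ and witnesses $x_i\edgeto y_i$, you establish the nesting $\mu(v)\subset\mu(u_2)\subset\mu(u_1)$, locate $\leftside(\mu(x_1))<\leftside(\mu(u_2))<\leftside(\mu(x_2))<\leftside(\mu(v))$ with both $\rightside(\mu(x_i))$ inside $\mu(v)$, and finish with the identical dichotomy on $\rightside(\mu(x_1))$ versus $\rightside(\mu(x_2))$, obtaining a triangle $\{x_1,x_2,v\}$ in one case and a violation of \ref{def:forbidden-structure} via $x_1\precede u_2\precede x_2$ in the other. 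Your derivation of the comparabilities $x_1\precede u_2$ and $x_2\precede v$ from the first-primary-ancestor definition is exactly what the paper leaves implicit; note also that $x_2\neq u_2$ (needed for the strict $u_2\precede x_2$ in the second branch) is automatic since $\rightside(\mu(x_2))<\beta<b_2$.
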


\begin{proof}
Suppose there are two vertices $u_1,u_2\in I$ such that $u_1\prec u_2\prec v$, $S(u_1)\edgeto S(v)$, and $S(u_2)\edgeto S(v)$.
It follows that there are vertices $x_1\in S(u_1)$, $x_2\in S(u_2)$, and $y_1,y_2\in S(v)$ with $x_1\edgeto y_1$ and $x_2\edgeto y_2$.
Hence, by Lemma \ref{lem:S}, we have $u_1\prec x_1\prec u_2\prec x_2\prec v\prec y_i$ for $i\in\{1,2\}$.
Since $\mu(x_1)\subset\mu(u_1)$, $\mu(x_2)\subset\mu(u_2)$, $\mu(y_1)\subset\mu(v)$, $\mu(y_2)\subset\mu(v)$, and $u_1$, $u_2$ and $v$ are independent in $G$, it follows that $\mu(u_1)$, $\mu(u_2)$ and $\mu(v)$ are nested (see Figure \ref{fig:secondary}).
Clearly, $\mu(x_1)$ and $\mu(x_2)$ overlap $\mu(v)$, and $\mu(x_1)$ overlaps $\mu(u_2)$.
If $\rightside(\mu(x_1))<\rightside(\mu(x_2))$, then $\mu(x_1)$, $\mu(x_2)$ and $\mu(v)$ form the configuration in Figure \ref{fig:restricted-triangle}, which contradicts the assumption that $G$ is triangle-free.
If $\rightside(\mu(x_1))>\rightside(\mu(x_2))$, then $\mu(x_1)$, $\mu(u_2)$ and $\mu(x_2)$ form the configuration in Figure \ref{fig:restricted-clean}, which is excluded by \ref{def:forbidden-structure}.
\end{proof}

We now show how to color the vertices of $G$ with $2k$ colors.
Let $I$ be a color class in a proper $k$-coloring of the primary vertices of $G$.
Consider all the sets $S(u)$ for $u\in I$.
By Lemma \ref{lem:S}, each of them is independent.
By Lemma \ref{lem:secondary}, the edges between the sets form a bipartite graph.
Therefore, we need two colors for the vertices in $\bigcup_{u\in I}S(u)$ and hence $2k$ colors for the whole~$G$.

It remains to show that the chromatic number of the subgraph of $G$ induced on the primary vertices is $O(\log\log n)$.
Clearly, this subgraph is itself an overlap game graph.
Therefore, from now on, we simply assume that all vertices of $G$ are primary.
This means that for every set $P$ of vertices lying on a common root-to-leaf path of $M$, the family of intervals $\{\mu(v)\colon v\in P\}$ is restricted.
The following is an immediate consequence of Lemma \ref{lem:restricted}.

\begin{lemma}
\label{lem:p-restricted}
Let\/ $P$ be a set of vertices lying on a common root-to-leaf path of\/ $M$.
\begin{enumeratea}
\item\label{lem:p-forest} For every\/ $u\in P$, there is at most one\/ $v\in P$ such that\/ $u\edgeto v$.
In particular, the graph\/ $G[P]$ is a forest.
\item\label{lem:p-interlace} There are no\/ $u_1,u_2,v_1,v_2\in P$ such that\/ $u_1\prec u_2\prec v_1\prec v_2$, $u_1\edgeto v_1$, and\/ $u_2\edgeto v_2$.
\end{enumeratea}
\end{lemma}

Now, the simplest idea would be to color the vertices by First-fit, processing them from left to right in the order $\prec$.
Then, by Lemma \ref{lem:p-restricted} \ref{lem:p-forest} and Proposition \ref{prop:first-fit}, the number of colors used would be logarithmic in the maximum length of a root-to-leaf path in $M$.
This is not enough when $M$ contains paths longer than $O(\log n)$.
To overcome this problem, we need to introduce the concept of heavy-light decomposition due to Sleator and Tarjan \cite{ST83}.

Let $T$ be a rooted tree.
We call an edge $uv$ of $T$, where $v$ is a child of $u$, \emph{heavy} if the subtree of $T$ rooted at $v$ contains more than half of the vertices of the subtree of $T$ rooted at $u$.
Otherwise, we call the edge $uv$ \emph{light}.
The resulting partition of the edges of $T$ into heavy and light edges is called the \emph{heavy-light decomposition} of $T$.
Since every vertex $u$ of $T$ has a heavy edge to at most one of its children, the heavy edges induce in $T$ a collection of paths, called \emph{heavy paths}.
The heavy-light decomposition has the following crucial property, easily proved by induction.

\begin{lemma}[Sleator, Tarjan \cite{ST83}]
\label{lem:heavy-light}
If there is a path in\/ $T$ starting at the root and containing at least\/ $k-1$ light edges, then\/ $T$ has at least\/ $2^k-1$ vertices.
\end{lemma}

Fix a heavy-light decomposition of $M$.
Form an auxiliary graph $G'$ by removing from $G$ the edges connecting pairs of vertices in different heavy paths.
By Lemma \ref{lem:p-restricted} \ref{lem:p-forest}, the vertices on each heavy path induce a forest in $G$.
Hence $G'$ is a forest and can be properly colored with two colors.
Let $C_1$ and $C_2$ be the color classes in a proper two-coloring of $G'$.
Fix $i\in\{1,2\}$.
It follows that for any $x,y\in C_i$, if $x\edgeto y$, then $x$ and $y$ lie on different heavy paths.
Color the induced subgraph $G[C_i]$ of $G$ by First-fit, processing the vertices from left to right in the order $\prec$.
That is, the color assigned to a vertex $v\in C_i$ is the least positive integer not assigned to any $u\in C_i$ with $u\edgeto v$.

\begin{lemma}
\label{lem:ff-heavy}
If First-fit assigns a color\/ $k$ to some vertex\/ $r\in C_i$, then the path\/ $M_r$ in\/ $M$ from the root to\/ $r$ contains at least\/ $2^{k-2}-1$ light edges.
\end{lemma}

\begin{proof}
Let $f(u)$ denote the color chosen by First-fit for each vertex $u\in C_i$.
Since $G[C_i]$ is a forest, we can find, as in the proof of Proposition \ref{prop:first-fit}, a minimal subtree $T$ of $G[C_i]$ that satisfies the following conditions (see Figure \ref{fig:forcing-tree}):
\begin{itemize}
\item $T$ contains the vertex $r$,
\item for any $v\in V(T)$, there are $u_1,\ldots,u_{f(v)-1}\in V(T)$ such that $u_j\edgeto v$ and $f(u_j)=j$ for $1\leq j\leq f(v)-1$.
\end{itemize}
Clearly, the entire $T$ lies on the path $M_r$.
Moreover, as in the proof of Proposition \ref{prop:first-fit}, $T$ contains exactly $2^{k-1}$ vertices, $2^{k-2}$ of which have color $1$ and $2^{k-2}$ have color greater than~$1$.
Let $v$ be a vertex of $T$ with color greater than $1$, and let $u\in V(T)$ be the vertex directly preceding $v$ in the order $\prec$ on $V(T)$.
We claim that $u$ is a child of $v$ in $T$.
Suppose it is not.
Since $u$ is not the $\prec$-greatest vertex of $T$, we have $u\neq r$, and therefore $u$ has a parent $p$ in $T$.
Since $v$ has color greater than $1$, it has a child $c$ in $T$.
It follows that $c\prec u\prec v\prec p$, as $u$ and $v$ are consecutive in the order $\prec$ on $V(T)$.
But we also have $c\edgeto v$ and $u\edgeto p$, which contradicts Lemma \ref{lem:p-restricted} \ref{lem:p-interlace}.
We have shown that there is an edge between $u$ and $v$ in $T$, so they must lie on different heavy paths.
Consequently, each vertex of $T$ with color greater than $1$ belongs to a different heavy path.
This shows that the path $M_r$ contains at least $2^{k-2}-1$ light edges.
\end{proof}

\begin{proof}[Proof of Lemma \ref{lem:game-graph-coloring}]
As noted before, we may assume that $G$ consists only of primary vertices and $M$ is a tree.
Let $k$ be the maximum color used by First-fit on a vertex $r\in C_i$.
By Lemma \ref{lem:ff-heavy}, the path $M_r$ in $M$ from the root to $r$ contains at least $2^{k-2}-1$ light edges.
This implies, by Lemma \ref{lem:heavy-light}, that $n\geq 2^{2^{k-2}}-1$.
Therefore, First-fit uses at most $O(\log\log n)$ colors on $C_i$.
We color $C_1$ and $C_2$ by First-fit using two separate sets of colors, obtaining a proper coloring of $G$ with $O(\log\log n)$ colors.
\end{proof}

\section{Clean directed families of frames = overlap game graphs}
\label{sec:relation}

\begin{proof}[Proof of Lemma \ref{lem:relation}]
First, we show that every overlap game graph is isomorphic to the intersection graph of a clean directed family of frames.
Let $G$ be an overlap game graph with meta-forest $M$ and representation $\mu$.
We define frames $F(v)$ for all vertices $v$ of $G$ so that
\begin{itemize}
\item $\leftside(F(v))=\leftside(\mu(v))$ and $\rightside(F(v))=\rightside(\mu(v))$,
\item $\bottomside(F(v))<\bottomside(F(v_1))<\topside(F(v_1))<\cdots<\bottomside(F(v_k))<\topside(F(v_k))<\topside(F(v))$, where $v_1,\ldots,v_k$ are the children of $v$ in $M$ in any order.
\end{itemize}
See Figure \ref{fig:correspondence} for an illustration.
The numbers $\bottomside(F(v))$ and $\topside(F(v))$ can be computed by performing a depth-first search over the forest $M$ and recording, for each vertex $v$, the times at which the subtree of $M$ rooted at $v$ is entered and left, respectively.
Clearly, if $u$ and $v$ are two vertices unrelated in $M$, then $\topside(F(u))<\bottomside(F(v))$ or $\topside(F(v))<\bottomside(F(u))$, so $F(u)$ and $F(v)$ do not intersect.
If $u\prec v$, then we have $\bottomside(F(u))<\bottomside(F(v))<\topside(F(v))<\topside(F(u))$, and therefore $F(u)$ and $F(v)$ intersect if and only if $\mu(u)$ and $\mu(v)$ overlap.
Moreover, when $F(u)$ and $F(v)$ intersect, this intersection is rightward-directed.
If there are three vertices $u$, $v$, $w$ such that $F(u)$ and $F(v)$ intersect and $F(w)$ is enclosed in both $F(u)$ and $F(v)$, then $u\prec v\prec w$, $\mu(u)$ and $\mu(v)$ overlap, and $\mu(w)\subset\mu(u)\cap\mu(v)$, which is excluded by the condition \ref{def:forbidden-structure} of the definition of an overlap game graph.
Consequently, $\{F(v)\colon v\in V(G)\}$ is a clean rightward-directed family of frames, and its intersection graph is isomorphic to $G$.

\begin{figure}[t]
\centering
\begin{tikzpicture}[scale=1.4,yscale=1.2,shorten <=-0.2pt,shorten >=-0.2pt,label distance=2pt]
  \draw[dotted,thick] (0,0) -- (1,0) -- ++ (30:1cm) -- ++ (1,0);
  \draw[dotted,thick] (0,0) -- (1,0) -- ++ (-30:1cm);
  \draw[|-|] (0,-1.3)--(0,1.3);
  \draw[|-|] (1,-1.1)--(1,1.1);
  \draw[|-|] (1.866,0.1)--(1.866,0.9);
  \draw[|-|] (2.866,0.3)--(2.866,0.7);
  \draw[|-|] (1.866,-0.8)--(1.866,-0.2);
  \tikzstyle{every node}=[circle,minimum size=3pt,inner sep=0pt,draw,fill]
  \node[label=left:$a$] (a) at (0,0) {};
  \node[label=above left:$b$] (b) at (1,0) {};
  \node[label=above left:$c$] (c) at (1.866,0.5) {};
  \node[label=right:$d$] (d) at (2.866,0.5) {};
  \node[label=right:$e$] (e) at (1.866,-0.5) {};
  \path (c) edge[bend right=30] (b);
  \path (d) edge[bend right=30] (c) edge[bend right=42] (a);
  \path (e) edge[bend left=30] (b) edge[bend left=30] (a);
  \draw[dashed] (0,-1.3)--(5,-1.3);
  \draw[dashed] (0,1.3)--(5,1.3);
  \draw[dashed] (1,-1.1)--(5.5,-1.1);
  \draw[dashed] (1,1.1)--(5.5,1.1);
  \draw[dashed] (1.866,0.1)--(6.5,0.1);
  \draw[dashed] (1.866,0.9)--(6.5,0.9);
  \draw[dashed] (2.866,0.3)--(7.5,0.3);
  \draw[dashed] (2.866,0.7)--(7.5,0.7);
  \draw[dashed] (1.866,-0.8)--(6.5,-0.8);
  \draw[dashed] (1.866,-0.2)--(6.5,-0.2);
  \draw[|-|] (5,3)--(8.5,3);
  \draw[|-|] (5.5,2.75)--(6.8,2.75);
  \draw[|-|] (6.2,2.5)--(9.1,2.5);
  \draw[|-|] (6.5,2.25)--(7.8,2.25);
  \draw[|-|] (7.5,2)--(8.8,2);
  \draw[dashed] (5,3)--(5,1.3);
  \draw[dashed] (8.5,3)--(8.5,1.3);
  \draw[dashed] (5.5,2.75)--(5.5,1.1);
  \draw[dashed] (6.8,2.75)--(6.8,1.1);
  \draw[dashed] (6.2,2.5)--(6.2,-0.2);
  \draw[dashed] (9.1,2.5)--(9.1,-0.2);
  \draw[dashed] (6.5,2.25)--(6.5,0.9);
  \draw[dashed] (7.8,2.25)--(7.8,0.9);
  \draw[dashed] (7.5,2)--(7.5,0.7);
  \draw[dashed] (8.8,2)--(8.8,0.7);
  \tikzstyle{every node}=[inner sep=2pt,fill=white]
  \node at (6.75,3) {$\mu(a)$};
  \node at (6.15,2.75) {$\mu(b)$};
  \node at (7.65,2.5) {$\mu(e)$};
  \node at (7.15,2.25) {$\mu(c)$};
  \node at (8.15,2) {$\mu(d)$};
  \draw (5,-1.3) rectangle (8.5,1.3);
  \draw (5.5,-1.1) rectangle (6.8,1.1);
  \draw (6.5,0.1) rectangle (7.8,0.9);
  \draw (7.5,0.3) rectangle (8.8,0.7);
  \draw (6.2,-0.8) rectangle (9.1,-0.2);
\end{tikzpicture}
\caption{Correspondence between overlap game graphs and intersection graphs of clean directed families of frames}
\label{fig:correspondence}
\end{figure}
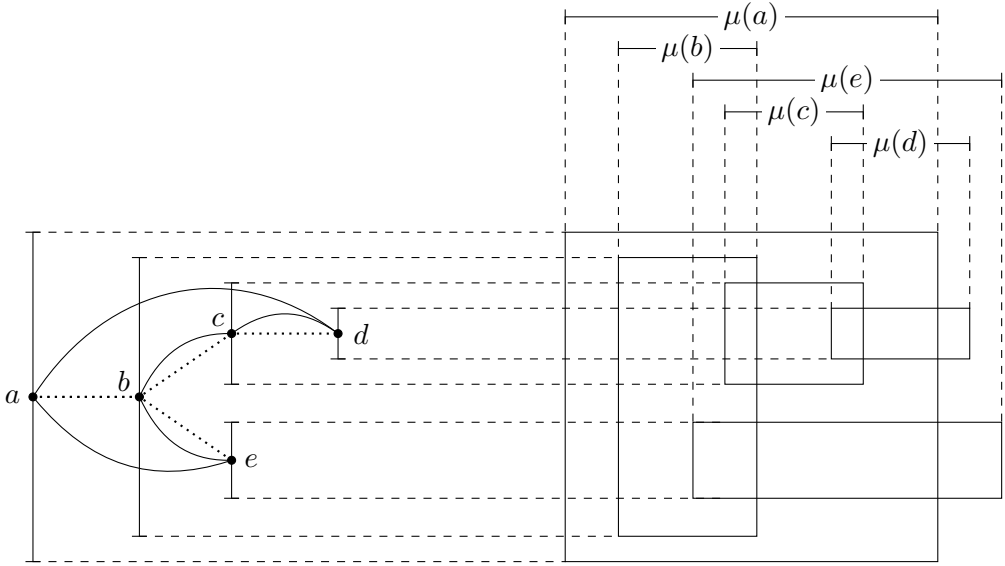

It remains to show that the intersection graph of every clean directed family of frames is an overlap game graph.
Let $\cgF$ be a directed family of frames.
We can assume without loss of generality that $\cgF$ is rightward-directed.
Define a mapping $\mu\colon\cgF\to\cgI$ so that $\leftside(\mu(F))=\leftside(F)$ and $\rightside(\mu(F))=\rightside(F)$, that is, $\mu(F)$ is the interval obtained by projecting $F$ on the $x$-axis.

For $F\in\cgF$, let $\cgL(F)$ be the subfamily of $\cgF$ consisting of those $F'$ for which $\leftside(F')<\leftside(F)<\rightside(F')$ and $\bottomside(F')<\bottomside(F)<\topside(F)<\topside(F')$.
We define a rooted forest $M$ on $\cgF$ as follows.
If $\cgL(F)=\emptyset$, then $F$ is a root of $M$.
Otherwise, the parent of $F$ in $M$ is the member $F'$ of $\cgL(F)$ with greatest $\leftside(F')$.
We show that the intersection graph of $\cgF$ is an overlap game graph with meta-forest $M$ and representation $\mu$.
To this end, we argue that the conditions \ref{def:increasing}--\ref{def:forbidden-structure} of the definition of an overlap game graph are satisfied by the intersection graph of $\cgF$.

It follows directly from the definition of parents that $F_1\prec F_2$ implies $\leftside(F_1)<\leftside(F_2)$ and $\bottomside(F_1)<\bottomside(F_2)<\topside(F_2)<\topside(F_1)$.
This already shows \ref{def:increasing} and the right-to-left implication in \ref{def:game-graph}.
Suppose there are $F_1,F_2,F_3\in\cgF$ such that $F_1\prec F_2\prec F_3$, $\mu(F_1)$ and $\mu(F_2)$ overlap, and $\mu(F_3)\subset\mu(F_1)\cap\mu(F_2)$.
We have $\leftside(F_1)<\leftside(F_2)<\leftside(F_3)<\rightside(F_3)<\rightside(F_1)<\rightside(F_2)$ and $\bottomside(F_1)<\bottomside(F_2)<\bottomside(F_3)<\topside(F_3)<\topside(F_2)<\topside(F_1)$, a configuration that is forbidden in a clean family of frames.
This contradiction shows \ref{def:forbidden-structure}.
Now, let $F_1$ and $F_2$ be two intersecting members of $\cgF$.
By the assumption that $\cgF$ is rightward-directed, $\mu(F_1)$ and $\mu(F_2)$ overlap, and we have $F_1\in\cgL(F_2)$ or $F_2\in\cgL(F_1)$.
Therefore, in order to prove the left-to-right implication in \ref{def:game-graph}, it remains to show that $F_1\in\cgL(F_2)$ implies $F_1\prec F_2$.
To this end, we use induction on the increasing order of $\leftside(F_2)$.
There is nothing to prove when $F_1$ is the parent of $F_2$.
Hence assume that $F_1\in\cgL(F_2)$, but $F_1$ is not the parent of $F_2$.
Let $F_2'$ be the parent of $F_2$.
We have $\leftside(F_1)<\leftside(F_2')<\leftside(F_2)<\rightside(F_1)$ and, since $\cgF$ is rightward-directed, $\bottomside(F_1)<\bottomside(F_2')<\topside(F_2')<\topside(F_1)$.
Thus $F_1\in\cgL(F_2')$.
This and the induction hypothesis yield $F_1\prec F_2'$ and thus $F_1\prec F_2$.
\end{proof}

\section{Reduction to clean directed families of frames}
\label{sec:geometry}

The goal of this section is to prove Lemmas \ref{lem:frame-geometry} and \ref{lem:frame-clean}.
To make the description simpler, we are going to partition a family of frames with bounded clique number into a bounded number of subfamilies with the property that the connected components of each subfamily are directed, but the directions of different components may be different.
Let $\xi(\cgF)$ denote the minimum size of such a partition of a family of frames $\cgF$.
It is enough to bound $\xi(\cgF)$, because we can gather the connected components of each partition class that have the same direction, thus turning each partition class into at most four directed families.
This way we will obtain the bound of $4\xi(\cgF)$ on the size of a partition of $\cgF$ into directed families.

Two families of frames $\cgF_1$ and $\cgF_2$ are \emph{mutually disjoint} if there are no two intersecting frames $F_1\in\cgF_1$ and $F_2\in\cgF_2$.
The following properties of $\xi$ (analogous to the properties of chromatic number) are straightforward:
\begin{enumeratei}
\item $\xi(\cgF_1\cup\cdots\cup\cgF_m)\leq\xi(\cgF_1)+\cdots+\xi(\cgF_m)$;
\item if $\cgF_i$ and $\cgF_j$ are mutually disjoint for $i\neq j$, then $\xi(\cgF_1\cup\cdots\cup\cgF_m)=\max\{\xi(\cgF_1),\ldots,\xi(\cgF_m)\}$.
\end{enumeratei}
We will use them implicitly in the rest of this section.
For convenience, whenever we consider a partition of a family into a number of subfamilies, we allow these subfamilies to be empty.

Recall that two frames are said to cross if both vertical sides of one intersect both horizontal sides of the other.
A family of frames $\cgF$ is \emph{non-crossing} if it contains no two crossing frames.

\begin{lemma}[implicit in \cite{AG60}]
\label{lem:cross}
Every family of frames\/ $\cgF$ with\/ $\omega(\cgF)\leq k$ can be partitioned into\/ $k$ non-crossing subfamilies.
\end{lemma}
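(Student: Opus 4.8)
The plan is to exhibit the ``crossing graph'' of $\cgF$---the graph on $\cgF$ in which two frames are joined exactly when they cross---as the comparability graph of a natural strict partial order, and then to colour it via Mirsky's theorem. First I would record the geometry of a single crossing. If $F$ and $F'$ cross, then by definition both vertical sides of one of them, say $F$, meet both horizontal sides of the other; in general position this forces $\leftside(F')<\leftside(F)<\rightside(F)<\rightside(F')$ together with $\bottomside(F)<\bottomside(F')<\topside(F')<\topside(F)$. In other words, exactly one of the two frames in a crossing is horizontally nested inside the other, while simultaneously vertically enclosing it. This asymmetry is genuine: the displayed inequalities place the vertical sides of $F'$ strictly outside the horizontal span of $F$, so the roles of $F$ and $F'$ cannot be interchanged. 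I would therefore orient each crossing, writing $F\prec F'$ whenever $F$ and $F'$ cross and $\leftside(F')<\leftside(F)$.

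The key step is to verify that $\prec$ is transitive. Suppose $F_1\prec F_2\prec F_3$. Expanding the defining inequalities of the two crossings and concatenating them yields $\leftside(F_3)<\leftside(F_2)<\leftside(F_1)<\rightside(F_1)<\rightside(F_2)<\rightside(F_3)$ and $\bottomside(F_1)<\bottomside(F_2)<\bottomside(F_3)<\topside(F_3)<\topside(F_2)<\topside(F_1)$. Thus $F_1$ is horizontally nested inside $F_3$ while $F_3$ is vertically enclosed by $F_1$, which is precisely the condition for $F_1$ and $F_3$ to cross with $F_1\prec F_3$. Hence $\prec$ is a strict partial order (irreflexivity and antisymmetry being immediate from the strict inequalities on $\leftside$), and its comparability graph is exactly the crossing graph. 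I expect this transitivity check to be the main, though still routine, point; everything else is bookkeeping.

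Finally I would translate the order back into a colouring. A chain $F_1\prec\ldots\prec F_m$ consists of pairwise crossing frames, and since crossing frames intersect, such a chain is a clique in the intersection graph of $\cgF$; therefore every chain has at most $\omega(\cgF)\leq k$ elements. By Mirsky's theorem a finite poset can be partitioned into a number of antichains equal to the size of its longest chain, so $\cgF$ splits into at most $k$ antichains of $\prec$. Each antichain contains no two $\prec$-comparable frames, hence no crossing pair, so it is a non-crossing subfamily; padding with empty families if necessary gives the desired partition into exactly $k$ non-crossing subfamilies.
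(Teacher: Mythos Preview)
Your proof is correct and follows essentially the same approach as the paper: define a partial order whose comparability graph is the crossing graph, observe that chains are cliques in the frame intersection graph so have length at most $k$, and then invoke Mirsky's theorem (equivalently, perfectness of comparability graphs) to partition into $k$ antichains. The paper's proof is terser but uses the identical order and the same perfectness argument; you have simply spelled out the transitivity check and the geometric inequalities that the paper leaves implicit.
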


\begin{proof}
Define a partial order $<$ on $\cgF$ so that $F_1<F_2$ whenever both vertical sides of $F_1$ intersect both horizontal sides of $F_2$.
The graph $G^+$ of crossing pairs of frames in $\cgF$ is the comparability graph of $<$, so it is perfect.
Hence $\chi(G^+)=\omega(G^+)\leq\omega(\cgF)\leq k$, and the lemma follows.
\end{proof}

A frame $E$ is \emph{external} to a family of frames $\cgF$ if $E\not\subset\bigcup_{F\in\cgF}\rect(F)$.
A subfamily $\cgL$ of a family of frames $\cgF$ is a \emph{layer} of $\cgF$ if $|\cgL|=1$ or every frame in $\cgL$ intersects some frame in $\cgF\setminus\cgL$ external to $\cgL$.
A family $\cgL\subset\cgF$ is an \emph{$m$-fold layer} of $\cgF$ if there is a chain of families $\cgL=\cgL_m\subset\cgL_{m-1}\subset\cdots\subset\cgL_0=\cgF$ such that $\cgL_{i+1}$ is a layer of $\cgL_i$ for $0\leq i\leq m-1$.
Note that every subfamily of a layer (an $m$-fold layer) of $\cgF$ is also a layer (an $m$-fold layer) of $\cgF$.

\begin{lemma}
\label{lem:bfs}
Every family of frames\/ $\cgF$ has a partition\/ $\scrP$ into layers.
Moreover, $\scrP$ can be partitioned into two subclasses each consisting of mutually disjoint layers.
\end{lemma}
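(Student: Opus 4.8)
The plan is to build the partition $\scrP$ as a sequence of concentric ``shells'' peeled from the outside, and then to let the two subclasses be the shells of even and of odd index. Since frames in different connected components of the intersection graph never intersect, the components are pairwise mutually independent, so it suffices to produce such a shell decomposition inside each component and afterwards to merge all even shells into one subclass and all odd shells into the other; mutual independence across components keeps each merged subclass a family of pairwise mutually independent layers. Thus I would assume from the start that the intersection graph of $\cgF$ is connected.

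For the construction itself I would peel from the unbounded face inward. Writing $\delta(x)$ for the number of frames $F$ with $x\in\interior\rect(F)$, a generic point of depth $\delta$ is separated from infinity by exactly $\delta$ frame boundaries, so the faces of the arrangement of $\cgF$ split naturally into depth levels. The first shell $\cgL_1$ should consist of frames bordering the exterior, the second shell of frames exposed once $\cgL_1$ is deleted, and so on. The key structural fact to record is that a frame bordering depth level $i$ can intersect only frames bordering levels $i-1$, $i$, or $i+1$, because its boundary is incident only to faces of neighbouring depth; this immediately gives that two shells of the same parity are mutually independent, which is exactly the second assertion of the lemma. For the first assertion I would use that each frame $F$ in a shell $\cgL_i$ with $i\geq 2$ was covered by some deleted frame $F^{*}$ of the previous shell, that $F$ and $F^{*}$ then cross (hence intersect), and that $F^{*}$, having been exposed one step earlier, sticks out of $\bigcup_{F'\in\cgL_i}\rect(F')$ and so is external to $\cgL_i$; singleton shells and the outermost shell are handled by the $\lvert\cgL\rvert=1$ clause or directly.

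The hard part will be the external-witness condition, and in particular the fact that the naive ``take all exposed frames'' shell can fail to be a layer. The obstruction is engulfment: several exposed frames may jointly cover the entire boundary of a deeper frame (so that frame is no longer external), and symmetric configurations exist in which every candidate witness for an exposed frame is swallowed by the union of the filling rectangles of its own shell. A concrete instance is a square surrounded by four long crossing frames whose filling rectangles together cover the square's boundary: those four frames cannot all sit in one valid layer. The remedy I would pursue is to peel not the whole exposed set but a maximal subfamily of it whose filling rectangles do not jointly engulf any remaining frame, so that a witness always survives; one must then re-check that this restricted peeling still forces intersections only between consecutive shells, keeping the shell-adjacency graph a path and hence bipartite. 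Reconciling these two demands---validity of every extracted layer and bipartiteness of the shell-adjacency graph---is the technical heart of the argument, and the tools for it are general position together with the planar Jordan-curve structure, which guarantee that every frame has boundary points incident to a strictly shallower face and thus a direction in which to find an external witness.
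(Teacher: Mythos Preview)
Your geometric peeling approach has a genuine gap, and it is not the one you flagged.  The claim that only consecutive shells can intersect is false.  Take three strictly nested frames $R_1\supset R_2\supset R_3$ together with one long thin horizontal frame $F$ that crosses all three and protrudes beyond $R_1$ on both sides.  Only $R_1$ and $F$ touch the unbounded face, so your first shell is $\{R_1,F\}$; after deleting it, $R_2$ is exposed and $R_3$ is not, so the second shell is $\{R_2\}$ and the third is $\{R_3\}$.  But $F$ (shell $1$) intersects $R_3$ (shell $3$), so shells of equal parity are \emph{not} mutually independent.  The ``key structural fact'' you state---that a frame bordering depth level $i$ can intersect only frames bordering levels $i-1,i,i+1$---is not well-posed, because a single frame can border arbitrarily many depth levels (here $F$ borders levels $0$ through $3$).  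This is independent of the engulfment problem you identified; even if you could repair the layer condition, the bipartition would still fail.

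The paper sidesteps all of this by working in the intersection graph rather than in the plane.  Pick any frame $R$ external to $\cgF\setminus\{R\}$ and let $\cgL_j$ be the set of frames at graph distance exactly $j$ from $R$.  Bipartiteness is then a triviality of breadth-first layers: edges of any graph only join consecutive distance classes.  For the layer condition, take $F\in\cgL_j$ and a shortest path $R=F_0F_1\cdots F_j=F$; the last $F_i$ along this path that is still external to $\cgL_j$ exists (since $R$ is), and because $F_i$ intersects $F_{i+1}\subset\bigcup_{F'\in\cgL_j}\rect(F')$, the frame $F_i$ must already meet some member of $\cgL_j$, forcing $i=j-1$ and giving $F_{j-1}$ as the external witness for $F$.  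No geometric depth, no engulfment fix, no re-checking of shell adjacencies is needed.
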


\begin{proof}
When $\cgF$ is not connected, the layers and the bipartition can be constructed for each component separately.
Thus assume, without loss of generality, that $\cgF$ is connected.
Choose any frame $R\in\cgF$ that is external to $\cgF\setminus\{R\}$.
For $j\in\setN$, let $\cgL_j$ consist of those frames in $\cgF$ whose distance to $R$ in the intersection graph of $\cgF$ is $j$.
Let $d$ be greatest such that $\cgL_d\neq\emptyset$.
We claim that $\scrP=\{\cgL_0,\ldots,\cgL_d\}$ is a partition of $\cgF$ into layers.

We have $|\cgL_0|=1$, so $\cgL_0$ is a layer.
To show that $\cgL_j$ with $j\geq 1$ is a layer, we find, for each $F\in\cgL_j$, a frame in $\cgF\setminus\cgL_j$ external to $\cgL_j$ and intersecting $F$.
Let $F_0F_1\ldots F_j$ be a path from $F_0=R$ to $F_j=F$ of length $j$.
Let $i$ be greatest such that $F_i$ is external to $\cgL_j$ (such $i$ exists, as $R$ is external to $\cgL_j$).
Since $F_i$ intersects $F_{i+1}$, which is not external to $\cgL_j$, $F_i$ must intersect some frame $F'\in\cgL_j$.
This witnesses a path from $R$ to $F'$ of length $i+1$, so $i+1=j$.
Therefore, $F_i$ is the requested frame external to $\cgL_j$ and intersecting $F$.

Clearly, any two intersecting frames belong to one layer or two layers with consecutive indices.
Hence the families $\{\cgL_j\colon j$ is odd$\}$ and $\{\cgL_j\colon j$ is even$\}$ consist of mutually disjoint layers.
\end{proof}

\begin{corollary}
\label{cor:bfs}
Every family of frames\/ $\cgF$ has a partition\/ $\scrP$ into\/ $m$-fold layers.
Moreover, $\scrP$ can be partitioned into\/ $2^m$ subclasses each consisting of mutually disjoint\/ $m$-fold layers.
\end{corollary}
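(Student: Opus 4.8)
The plan is to prove the corollary by induction on $m$, with Lemma \ref{lem:bfs} serving both as the base case and as the engine of the inductive step. For $m=1$, an $m$-fold layer is exactly a layer, so the statement coincides verbatim with Lemma \ref{lem:bfs}, which already yields a partition of $\cgF$ into layers split into $2=2^1$ subclasses of mutually independent layers.

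For the inductive step, I would assume the result for some $m\geq 1$, so that $\cgF$ has a partition $\scrP$ into $m$-fold layers together with a splitting $\scrP=\cgS_1\cup\ldots\cup\cgS_{2^m}$ in which each $\cgS_i$ consists of mutually independent $m$-fold layers. The idea is to refine every $m$-fold layer one step further. For each $\cgL\in\scrP$ I would apply Lemma \ref{lem:bfs} to $\cgL$ itself, obtaining a partition of $\cgL$ into layers and a splitting of these layers into two subclasses $\cgL^{(0)}$ and $\cgL^{(1)}$ of mutually independent layers. By the definition of an $m$-fold layer, each layer of $\cgL$ is an $(m+1)$-fold layer of $\cgF$: the chain $\cgL=\cgL_m\subset\ldots\subset\cgL_0=\cgF$ witnessing that $\cgL$ is an $m$-fold layer is simply extended by appending the new layer. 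Collecting, over all $\cgL\in\scrP$, the layers produced in this way gives the desired partition of $\cgF$ into $(m+1)$-fold layers.

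It remains to organize these $(m+1)$-fold layers into $2^{m+1}$ subclasses of mutually independent families, and this is where the bookkeeping lives. Two easy observations drive it: first, mutual independence is inherited by subfamilies, since a subfamily of a layer is still contained in that layer; second, by Lemma \ref{lem:bfs} the layers coming from a single $\cgL$ split into the two mutually independent groups $\cgL^{(0)}$ and $\cgL^{(1)}$. For each original subclass $\cgS_i$ and each $\epsilon\in\{0,1\}$, I would form the subclass consisting of all layers in $\cgL^{(\epsilon)}$ as $\cgL$ ranges over $\cgS_i$. Mutual independence within such a subclass holds for two reasons: two layers arising from the same $\cgL$ lie in $\cgL^{(\epsilon)}$ and are mutually independent by Lemma \ref{lem:bfs}, while two layers arising from distinct $\cgL,\cgL'\in\cgS_i$ are subfamilies of the mutually independent $m$-fold layers $\cgL$ and $\cgL'$ and hence are themselves mutually independent. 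This produces $2\cdot 2^m=2^{m+1}$ subclasses and completes the induction.

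The whole argument is a clean double induction, and no single step is genuinely difficult; the only point requiring real care is the last one, namely verifying that the ``same-parity'' layers gathered from \emph{different} $m$-fold layers within one subclass $\cgS_i$ remain pairwise mutually independent. This is exactly the step that consumes both the inheritance of mutual independence by subfamilies and the mutual independence already guaranteed inside $\cgS_i$, so I would make sure to state those two facts explicitly before invoking them.
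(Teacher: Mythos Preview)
Your proposal is correct and follows essentially the same inductive approach as the paper: induct on $m$, use Lemma~\ref{lem:bfs} for the base case, and for the step apply Lemma~\ref{lem:bfs} to each $m$-fold layer and combine the resulting two-way split with the $2^m$ subclasses from the induction hypothesis to obtain $2^{m+1}$ subclasses. Your write-up is in fact slightly more explicit than the paper's, since you spell out the verification that layers gathered from distinct $m$-fold layers within the same subclass remain mutually independent.
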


\begin{proof}
We proceed by induction on $m$.
For $m=1$, this is exactly the statement of Lemma \ref{lem:bfs}.
For $m\geq 2$, we apply induction hypothesis to construct a partition $\scrP'$ of $\cgF$ into $(m-1)$-fold layers and a partition of $\scrP'$ into $2^{m-1}$ subclasses $\scrP_1,\ldots,\scrP_{2^{m-1}}$ each consisting of mutually disjoint layers.
We apply Lemma \ref{lem:bfs} to construct a partition $\scrP^\cgL$ of each $(m-1)$-fold layer $\cgL\in\scrP'$ into layers, which are now $m$-fold layers of $\cgF$, and a partition of $\scrP^\cgL$ into two subclasses $\scrP^\cgL_1$ and $\scrP^\cgL_2$ each consisting of mutually disjoint layers.
We set $\scrP=\bigcup_{\cgL\in\scrP}\scrP^\cgL$.
The desired partition of $\scrP$ is formed by the families $\bigcup_{\cgL\in\scrP_i}\scrP^\cgL_1$ and $\bigcup_{\cgL\in\scrP_i}\scrP^\cgL_2$ for $1\leq i\leq 2^{m-1}$.
\end{proof}

\begin{theorem}[Asplund, Gr\"unbaum \cite{AG60}]
\label{thm:rect}
Let\/ $\cgR$ be a family of axis-parallel rectangles with\/ $\omega(\cgR)\leq k$.
It follows that\/ $\chi(\cgR)\leq\alpha_k$, where\/ $\alpha_k$ depends only on\/ $k$.
\end{theorem}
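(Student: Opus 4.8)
The plan is to reduce Theorem \ref{thm:rect}, in a few controlled steps, to the elementary one-dimensional fact that rectangles stabbed by a common axis-parallel line induce an interval graph. Two preliminary tools set the stage. First, if every rectangle of a family crosses a fixed vertical line $x=c$, then all their horizontal extents contain $c$, so two such rectangles intersect exactly when their vertical extents overlap; the family therefore induces an interval graph and is colorable with $\omega\le k$ colors (and symmetrically for a horizontal line). Second, axis-parallel boxes enjoy the Helly property: a set of pairwise intersecting rectangles has a common point, obtained by applying the one-dimensional Helly theorem to the horizontal and to the vertical extents separately. I would use Lemma \ref{lem:cross} at the outset to partition $\cgR$ into $k$ non-crossing subfamilies; as $\chi$ is subadditive over unions, it suffices to bound $\chi(\cgF)$ for a non-crossing family $\cgF$ with $\omega(\cgF)\le k$ and multiply the result by $k$.

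For non-crossing families I would argue by induction on $k$. Let $\cgK\subseteq\cgF$ consist of the containment-maximal rectangles (those contained in no other member of $\cgF$). I claim $\omega(\cgF\setminus\cgK)\le k-1$: if $Q\subseteq\cgF\setminus\cgK$ is a clique, pick a common point $p$ of $Q$ via Helly, take any $F\in Q$, and continue up a strictly increasing containment chain $F\subsetneq F'\subsetneq\cdots$ to a containment-maximal rectangle $F^\ast\in\cgK$; since $Q$ avoids $\cgK$ we have $F^\ast\notin Q$, while $F^\ast\supseteq F\ni p$ shows that $F^\ast$ meets every member of $Q$, so $Q\cup\{F^\ast\}$ is a larger clique, forcing $|Q|\le k-1$. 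Hence $\chi(\cgF)\le\chi(\cgK)+\chi(\cgF\setminus\cgK)\le\chi(\cgK)+\alpha_{k-1}$, and it remains to color the containment-free non-crossing family $\cgK$.

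Bounding $\chi(\cgK)$ is the heart of the argument and the step I expect to be the main obstacle. Here I would deploy the layering machinery: Corollary \ref{cor:bfs}, applied with a constant $m=m(k)$, partitions $\cgK$ into $m$-fold layers falling into $2^m$ classes of mutually independent layers, and since mutually independent families are non-adjacent, $\chi(\cgK)\le 2^m\max_{\cgL}\chi(\cgL)$ over the $m$-fold layers $\cgL$. This reduces the task to coloring a single $m$-fold layer. The definition of a layer forces every frame of $\cgL$ to meet a frame external to $\cgL$, that is, one escaping $\bigcup_{F\in\cgL}\rect(F)$; geometrically every member of $\cgL$ reaches the boundary of the region covered by $\cgL$. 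I expect that, for a non-crossing containment-free family, iterating this boundary-reaching condition $m$ times collapses the layer to a union of a bounded (in $k$) number of line-pierced subfamilies, each an interval graph colorable with $\le k$ colors by the first tool. Establishing this collapse---tracking the rectilinear boundary of the covered union across the iterated layers and ruling out the problematic configurations (a neighbor poking in from the top and another from the bottom, which would defeat a naive left-to-right first-fit) using the non-crossing hypothesis and the Helly property---is where the genuine geometric work lies.

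Putting the estimates together yields the theorem. Coloring one $m$-fold layer costs $O(k)$ colors, the independence split multiplies by $2^{m(k)}$, the induction on $k$ contributes the additive $\alpha_{k-1}$, and the initial removal of crossings multiplies by $k$; unwinding the recursion gives $\chi(\cgR)\le\alpha_k$ with $\alpha_k$ a function of $k$ alone.
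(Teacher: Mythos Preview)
The paper does not prove Theorem~\ref{thm:rect}; it is quoted as the result of Asplund and Gr\"{u}nbaum \cite{AG60}, with the remark that the bound there is $4k^2-3k$ (and $4k-3$ for non-crossing families). So there is no in-paper proof to compare against; one can only assess whether your argument stands on its own.

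Your first two moves are sound and are in fact the opening of the classical proof: splitting off crossings via the comparability-graph observation (your appeal to Lemma~\ref{lem:cross} is fine, since the same order works for filled rectangles), and peeling off the containment-maximal layer with a Helly argument to drop $\omega$ by one. The gap is your third step. Bounding $\chi(\cgK)$ for the non-crossing, containment-free family is the entire content of the theorem in the non-crossing case, and you do not prove it; you write ``I expect that \ldots'' and defer the ``genuine geometric work''. The mechanism you sketch---that iterating Corollary~\ref{cor:bfs} will force an $m$-fold layer to decompose into boundedly many line-pierced interval subfamilies---is not substantiated, and I do not see why it should hold: the layer condition only says each rectangle touches something external, which is far from saying a common axis-parallel line stabs a positive fraction of the layer. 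Note also that the paper's own use of the layering machinery (Lemmas~\ref{lem:geometry-inner} and~\ref{lem:geometry-outer}) \emph{consumes} Theorem~\ref{thm:rect} as an ingredient rather than replacing it, so importing more of that pipeline would be circular.

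If you want to close the gap along classical lines: in a non-crossing family, any two intersecting rectangles satisfy that one contains a corner of the other. Orienting each edge toward the rectangle whose corner is contained and recording which of the four corners is involved lets you cover the graph by four ``point-stabbed'' subfamilies, each of which is (after the containment-maximal peeling) easily seen to have clique number $\leq k-1$ or to be perfect; unwinding gives the $4k-3$ bound for non-crossing families and hence $\alpha_k\leq k(4k-3)$ overall. That corner-containment observation is the missing idea your outline needs.
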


The precise bound of Theorem \ref{thm:rect} in \cite{AG60} is $4k^2-3k$.
This was improved to $3k^2-2k-1$ by Hendler \cite{Hen98}.
However, we are going to apply Theorem \ref{thm:rect} only to non-crossing families $\cgR$, for which the bound in \cite{AG60} is $4k-3$.

A curve is \emph{$x$-semimonotone} if its intersection with every vertical line is connected or empty.

\begin{theorem}[Suk \cite{Suk14}]
\label{thm:suk}
Let\/ $V$ be a vertical line.
Let\/ $\cgC$ be a family of\/ $x$-semimonotone curves contained in the right half-plane delimited by\/ $V$ such that\/ $|V\cap C|=1$ for any\/ $C\in\cgC$, $|C_1\cap C_2|\leq 1$ for any distinct\/ $C_1,C_2\in\cgC$, and\/ $\omega(\cgC)\leq k$.
It follows that\/ $\chi(\cgC)\leq\beta_k$, where\/ $\beta_k$ depends only on\/ $k$.
\end{theorem}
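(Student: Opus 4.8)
The plan is to prove Theorem \ref{thm:suk} by induction on $k=\omega(\cgC)$. First I would normalize the picture so that $V=\{x=0\}$ and every curve lies in the closed right half-plane $\{x\geq 0\}$; since each $C\in\cgC$ is $x$-semimonotone and meets $V$ exactly once, I parametrize $C$ so that its $x$-coordinate is nondecreasing, starting from the unique point $p(C)=V\cap C$ on $V$. The base case $k=1$ is trivial, as then no two curves meet and one color suffices. Distinct components of the intersection graph $G$ of $\cgC$ can be colored with the same palette, so I may assume $G$ is connected.

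Next I would isolate a single \emph{breadth-first layer}. Fix any $C_0\in\cgC$ and let $L_j$ be the set of curves at graph distance $j$ from $C_0$ in $G$, exactly in the spirit of the layering built in Lemma \ref{lem:bfs}. Every edge of $G$ joins two curves in the same layer or in two consecutive layers, so by assigning one palette to the even-indexed layers and a disjoint palette to the odd-indexed ones I reduce the task to bounding $\chi(G[L_j])$ for a single layer, at the cost of a factor of $2$. Each $L_j$ is again a subfamily satisfying the hypotheses of the theorem, and, crucially, for $j\geq 1$ every curve of $L_j$ crosses some curve of $L_{j-1}$. Thus each curve of $L_j$ is \emph{anchored} to the ``wall'' $\bigcup_{C\in L_{j-1}}C$, which pins down one end of it; this anchoring is the structural feature that drives the remainder of the argument.

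Within one layer $L_j$ I would then drop the clique number by one. The goal is to select $I\subseteq L_j$ that is \emph{independent} in $G$ yet meets every maximum clique of $G[L_j]$: then $G[L_j]-I$ has clique number at most $k-1$, hence chromatic number at most $\beta_{k-1}$ by the induction hypothesis, while $I$ needs only one further color. I would build $I$ by an extremal sweep along $V$, ordering the curves of $L_j$ by the height of $p(C)$ and greedily retaining those that do not cross an already retained one, using the at-most-one-crossing condition together with the anchoring to $\bigcup_{C\in L_{j-1}}C$ to argue that the retained set meets every $k$-clique. This is precisely the mechanism behind Gy\'{a}rf\'{a}s's proof that interval overlap graphs are $\chi$-bounded \cite{Gya85,Gya86}, transported to curves pierced by the common line $V$. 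Combining the within-layer step with the parity step yields the recurrence $\beta_k\leq 2(\beta_{k-1}+1)$, and hence $\beta_k=2^{O(k)}$.

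The main obstacle is exactly this within-layer clique reduction: showing that the greedily chosen $I$ is \emph{simultaneously} independent and a transversal of all maximum cliques. This is the point at which all three hypotheses must be used together---that the curves are $x$-semimonotone, that each pair crosses at most once, and that each curve crosses the common line $V$ exactly once---and where the geometry does the real work. If the naive extremal rule turns out to be too crude, a fallback is to exploit the single-crossing and single-piercing conditions to endow $L_j$ with a well-defined linear (or cyclic) order coming from its endpoints on $V$ and on the wall, thereby reducing $G[L_j]$ to a genuine interval overlap graph on which the Gy\'{a}rf\'{a}s bound applies verbatim.
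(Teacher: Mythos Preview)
The paper does not prove Theorem~\ref{thm:suk}; it is quoted as an external result of Suk (with the remark that McGuinness proved the case $k=2$ and Laso\'n et al.\ later removed the $x$-semimonotonicity hypothesis). There is therefore no proof in the paper to compare your attempt against.

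As for the attempt itself, the BFS layering and the parity split are fine and standard, but the proposal has a genuine gap at exactly the point you flag. The greedy sweep along $V$ produces an independent \emph{dominating} set $I$ in $L_j$, but domination does not imply that $I$ meets every maximum clique: for a $k$-clique $K$ disjoint from $I$, each $v\in K$ has an earlier $I$-neighbour $u_v$, yet nothing you have said forces any $u_v$ to be adjacent to all of $K$ (which is what you would need to contradict $\omega\leq k$). One can check that already for $k=2$ the obvious combinatorial counterexamples are blocked only by a nontrivial planarity/ordering argument using $x$-semimonotonicity and the single-crossing condition, and you have not supplied that argument. This geometric step is precisely where the work in Suk's proof lies; invoking ``the mechanism behind Gy\'arf\'as's proof'' does not transfer automatically, since Gy\'arf\'as's argument relies on the interval structure that you have not established here.

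Your fallback is likewise unjustified: a curve in $L_j$ may meet the wall $\bigcup_{C\in L_{j-1}}C$ in many points, and the wall is a union of curves rather than a single line, so there is no canonical ``second endpoint'' with which to manufacture an interval overlap representation of $G[L_j]$. Without a concrete reduction of this kind or a direct proof of the independent-transversal claim, the induction does not close.
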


The actual theorem in \cite{Suk14} concerns families of $x$-monotone curves, that is, curves whose intersection with every vertical line is a single point or empty.
The generalization to $x$-semimonotone curves above comes from the fact that we can always perturb a family of $x$-semimonotone curves to make them $x$-monotone without changing their intersection graph.
Theorem \ref{thm:suk} for $k=2$ but without the $x$-semimonotonicity assumption was proved by McGuinness \cite{McG00}.
Laso\'n et al.\ \cite{LMPW14} generalized Theorem \ref{thm:suk} removing the $x$-semimonotonicity assumption for any $k$.

For the rest of this section, assume that $\alpha_k$ and $\beta_k$ are as in the statements of Theorems \ref{thm:rect} and \ref{thm:suk}.
Define constants $\gamma_k$ and $\delta_{k,m}$ for $1\leq m\leq k$ by induction on $k$ and $m$, as follows:
\begin{equation*}
\gamma_1=1,\qquad\delta_{k,1}=0,\qquad\delta_{k,m}=4(m-1)(\beta_k^4+2\gamma_{k-1}+\delta_{k,m-1}),\qquad\gamma_k=2^k\alpha_k\delta_{k,k}.
\end{equation*}
We are going to prove the bound $\xi(\cgF)\leq k\gamma_k$ for families of frames $\cgF$ with $\omega(\cgF)\leq k$.

\begin{lemma}
\label{lem:geometry-inner}
Let\/ $\cgF$ be a non-crossing family of frames with\/ $\omega(\cgF)\leq k$, let $1\leq m\leq k$, let $B_1,\ldots,B_{k-m+1}$ be pairwise intersecting frames, and let\/ $\cgL\subset\cgF$ be a family of frames enclosed in each of\/ $B_1,\ldots,B_{k-m+1}$ such that\/ $\cgL\cup\{B_1,\ldots,B_{k-m+1}\}$ is an\/ $m$-fold layer of\/ $\cgF$.
Suppose that\/ $\xi(\cgL')\leq\gamma_{k-1}$ holds for any subfamily\/ $\cgL'\subset\cgL$ with\/ $\omega(\cgL')\leq k-1$.
It follows that\/ $\xi(\cgL)\leq\delta_{k,m}$.
\end{lemma}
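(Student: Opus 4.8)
The plan is to induct on $m$. For the base case $m=1$ we have $k-m+1=k$, so $B_1,\dots,B_k$ are pairwise intersecting and form a clique of size $k$. I would show that $\cgL=\emptyset$, which gives $\xi(\cgL)=0=\delta_{k,1}$. Indeed, if some $F\in\cgL$ existed, then $\cgL\cup\{B_1,\dots,B_k\}$ would be a layer of $\cgF$ of size greater than $1$, so $F$ would intersect a frame $F'\in\cgF$ external to this layer. Since $F$ lies inside $\bigcap_i\rect(B_i)$ while $F'$, being external, leaves the union $\bigcup_i\rect(B_i)$, the connected closed curve $F'$ must cross the boundary of every $\rect(B_i)$; thus $\{F',B_1,\dots,B_k\}$ would be a clique of size $k+1$, contradicting $\omega(\cgF)\le k$.

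For the inductive step ($m\ge2$) I would exploit the very same crossing phenomenon, now productively. Writing $\cgL\cup\{B_1,\dots,B_{k-m+1}\}$ as a layer of an $(m-1)$-fold layer $\cgL_{m-1}$, every $F\in\cgL$ intersects some frame $F'$ of $\cgL_{m-1}$ external to this layer, and exactly as above such an $F'$ crosses all the $B_i$. Hence $\{B_1,\dots,B_{k-m+1},F'\}$ is a set of $k-m+2=k-(m-1)+1$ pairwise intersecting frames. I would collect all such external frames into a family $\cgX$ and use its members as the ``new walls'' driving the recursion in $m$.

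The coloring itself would split $\cgL$ into $4(m-1)$ groups according to the geometric type of the external frame witnessing each $F$ (the direction in which it exits $\bigcap_i\rect(B_i)$, together with bookkeeping over the levels already peeled); since $\xi$ is subadditive, it then suffices to bound each group by $\beta_k^4+2\gamma_{k-1}+\delta_{k,m-1}$. Inside a group I would distinguish the frames of $\cgL$ that \emph{touch} an external frame from those \emph{enclosed} by one. A frame intersecting a fixed $F'\in\cgX$ lies in a subfamily of clique number at most $k-1$ (a clique there, together with $F'$, would be a clique of $\cgF$), so the hypothesis of the lemma bounds its $\xi$ by $\gamma_{k-1}$; two such subfamilies account for the $2\gamma_{k-1}$ term. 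The frames enclosed by an external frame form, together with $B_1,\dots,B_{k-m+1}$ and the relevant $F'$, a family enclosed by $k-(m-1)+1$ pairwise intersecting frames that is an $(m-1)$-fold layer, so the inductive hypothesis bounds their contribution by $\delta_{k,m-1}$. To organize the many external frames so that these local bounds combine, I would cut each into a bounded number of $x$-semimonotone arcs, all pierced by a common line through $\bigcap_i\rect(B_i)$, and apply Theorem \ref{thm:suk} in each direction; this is the source of the $\beta_k^4$ term.

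The main obstacle will be the bookkeeping that makes the recursion close: proving that the enclosed frames, with the chosen external frame adjoined as a new wall, genuinely constitute an $(m-1)$-fold layer, so that exactly one layer is peeled and the parameter drops by one while the wall count rises by one; and proving that the directional classification together with the Suk-coloring of $\cgX$ really renders each local family either mutually independent from the others or of clique number at most $k-1$. Verifying these geometric facts for non-crossing frames---in particular that an external frame leaving the union must cross every $B_i$, and that adjoining a new wall creates neither a triangle nor a forbidden enclosure---is where the genuine work lies; the arithmetic then follows mechanically from the subadditivity of $\xi$ and the definition of $\delta_{k,m}$.
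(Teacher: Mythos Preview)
Your base case and the overall inductive architecture match the paper's proof: both peel off one layer, use external frames that must cross every $B_i$ as new walls, and aim at the same three-term recursion $\delta_{k,m}=4(m-1)(\beta_k^4+2\gamma_{k-1}+\delta_{k,m-1})$.  However, several of the key mechanisms in your sketch are misplaced, and as written the argument would not close.

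First, the factor $m-1$ does not come from ``bookkeeping over the levels already peeled''.  In the paper one takes, among the external frames of $\cgL_{m-1}$ that cross all the $B_i$, only the \emph{maximal} ones (not enclosed by another such frame); call this family $\cgS$.  Because each $S\in\cgS$ together with $B_1,\ldots,B_{k-m+1}$ forms a clique, one gets $\omega(\cgS)\le m-1$; moreover maximality forces two members of $\cgS$ to intersect exactly when their filling rectangles do.  After sorting $\cgS$ by the side of $B_1$ it exits (the factor~$4$), each $\cgS^i$ is an interval graph and hence $(m-1)$-colourable into subfamilies with pairwise disjoint filling rectangles.  That is where $m-1$ enters.

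Second, your two-way split ``touching versus enclosed'' is too coarse, and your attribution of the $\beta_k^4$ term is off.  The paper uses a three-way split of the frames in $\cgL$ relative to a disjoint-rectangle class $\cgS_j$: fully enclosed ($\cgX_j$, handled by induction, giving $\delta_{k,m-1}$); top or bottom side enclosed but not the whole frame ($\cgY_j$, giving $2\gamma_{k-1}$); and merely intersecting some $S\in\cgS_j$ with neither enclosure ($\cgZ_j$, giving $\beta_k^4$).  Suk's theorem is \emph{not} applied to the external frames; it is applied to short/long lower/upper ``traces'' of the frames in $\cgZ_j$, extended inside the disjoint rectangles of $\cgS_j$ to a common vertical line.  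Two rounds of Suk ($\beta_k^2$ for short traces, then $\beta_k^2$ for long traces within each colour class) yield classes that are genuinely leftward-directed, hence $\xi(\cgZ_j)\le\beta_k^4$.  Colouring the external frames themselves with Suk, as you propose, does not by itself force the corresponding $\cgL$-frames to be directed.

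Third, your clique-drop argument ``a clique there, together with $F'$, would be a clique of $\cgF$'' presupposes that all members of the clique touch the \emph{same} external frame, which you have not arranged.  The paper's argument for $\cgY_j^b$ is different: in any clique there, let $F$ have least $b(F)$; its bottom side lies in some $S\in\cgS_j$, and every other clique member, intersecting $F$ and having higher bottom side, is forced to intersect that same $S$.  This geometric step is what makes the $2\gamma_{k-1}$ bound go through.
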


\begin{proof}
The proof goes by induction on $m$.
If $m=1$, then the assumption that $\cgL\cup\{B_1,\ldots,B_k\}$ is a layer of $\cgF$ yields $\cgL=\emptyset$.
Indeed, if there was some $F\in\cgL$, then there would be a frame $E\in\cgF$ external to $\cgL\cup\{B_1,\ldots,B_k\}$ and intersecting $F$.
Hence $\{B_1,\ldots,B_k,E\}$ would be a $(k+1)$-clique in $\cgF$.
We conclude that $\xi(\cgL)=0=\delta_{k,1}$.

Now, assume that $2\leq m\leq k$ and the lemma holds for $m-1$.
We show that $\xi(\cgL)\leq 4(m-1)(\beta_k^4+2\gamma_{k-1}+\delta_{k,m-1})=\delta_{k,m}$.

Since $\cgL\cup\{B_1,\ldots,B_{k-m+1}\}$ is an $m$-fold layer of $\cgF$, there is an $(m-1)$-fold layer $\cgM$ of $\cgF$ such that $\cgL\cup\{B_1,\ldots,B_{k-m+1}\}$ is a layer of $\cgM$.
Let $\cgS$ be the family of those frames in $\cgM$ that intersect all of $B_1,\ldots,B_{k-m+1}$ and do not lie inside any frame in $\cgM$ intersecting all of $B_1,\ldots,B_{k-m+1}$.
It follows that $\omega(\cgS)\leq m-1$ and two frames in $\cgS$ intersect if and only if their filling rectangles intersect.
Since $\cgL\cup\{B_1,\ldots,B_{k-m+1}\}$ is a layer of $\cgM$, for every $F\in\cgL$, there is a frame $E\in\cgM$ that is external to $\cgL\cup\{B_1,\ldots,B_{k-m+1}\}$ and intersects $F$, which implies that $E$ or some frame enclosing $E$ belongs to $\cgS$.
Hence every frame in $\cgL$ intersects or is enclosed in some frame in $\cgS$.

We partition $\cgS$ into four families $\cgS^1$, $\cgS^2$, $\cgS^3$ and $\cgS^4$ so that every frame in $\cgS^1$, $\cgS^2$, $\cgS^3$ or $\cgS^4$ intersects the left, right, bottom or top side of $B_1$, respectively.
Next, we partition $\cgL$ into four families $\cgL^1$, $\cgL^2$, $\cgL^3$ and $\cgL^4$ so that every frame in $\cgL^i$ intersects or is enclosed in some frame in $\cgS^i$.
We have $\xi(\cgL)\leq\xi(\cgL^1)+\xi(\cgL^2)+\xi(\cgL^3)+\xi(\cgL^4)$.
We show that $\xi(\cgL^i)\leq(m-1)(\beta_k^4+2\gamma_{k-1}+\delta_{k,m-1})$.
In the following, we assume without loss of generality that $i=1$, that is, we are to bound $\xi(\cgL^1)$.

We partition $\cgL^1$ into three families $\cgX$, $\cgY$ and $\cgZ$ as follows.
Fix a frame $F\in\cgL^1$.
If there is a frame in $\cgS^1$ that encloses $F$, then we put $F$ in $\cgX$.
Otherwise, if there is a frame in $\cgS^1$ that encloses the entire top or bottom side of $F$, then we put $F$ in $\cgY$.
If neither of the above holds, then we put $F$ in $\cgZ$.

The intersection graph of $\cgS^1$ is isomorphic to the intersection graph of the filling rectangles of the frames in $\cgS^1$, so it is an interval graph, as all these rectangles intersect the vertical line containing the left side of $B_1$.
Hence $\chi(\cgS^1)=\omega(\cgS^1)\leq m-1$.
Therefore, we can partition $\cgS^1$ into $m-1$ families $\cgS_1,\ldots,\cgS_{m-1}$ each consisting of frames with pairwise disjoint filling rectangles.
We also partition each of $\cgX$, $\cgY$ and $\cgZ$ into $m-1$ families $\cgX_1,\ldots,\cgX_{m-1}$, $\cgY_1,\ldots,\cgY_{m-1}$ and $\cgZ_1,\ldots,\cgZ_{m-1}$, respectively, so that
\begin{itemize}
\item if $F\in\cgX_j$, then $F$ is enclosed in a frame in $\cgS_j$,
\item if $F\in\cgY_j$, then the bottom or top side of $F$ is enclosed in a frame in $\cgS_j$,
\item if $F\in\cgZ_j$, then $F$ intersects a frame in $\cgS_j$ and neither of the above holds.
\end{itemize}
We show that $\xi(\cgX_j)\leq\delta_{k,m-1}$, $\xi(\cgY_j)\leq 2\gamma_{k-1}$ and $\xi(\cgZ_j)\leq\beta_k^4$ for $1\leq j\leq m-1$.
Once this is achieved, we will have
\begin{equation*}
\xi(\cgL^1)\leq\sum_{j=1}^{m-1}\xi(\cgX_j)+\sum_{j=1}^{m-1}\xi(\cgY_j)+\sum_{j=1}^{m-1}\xi(\cgZ)\leq(m-1)(\beta_k^4+2\gamma_{k-1}+\delta_{k,m-1}).
\end{equation*}

First, we show that $\xi(\cgX_j)\leq\delta_{k,m-1}$.
We partition $\cgX_j$ into $|\cgS_j|$ families $\cgX_S$ for $S\in\cgS_j$ so that if $F\in\cgX_S$, then $F$ is enclosed in $S$.
The families $\cgX_S$ for $S\in\cgS_j$ are mutually disjoint, as the filling rectangles of the $S\in\cgS_j$ are pairwise disjoint.
Let $S\in\cgS_j$.
It follows that $S$ intersects all of $B_1,\ldots,B_{k-m+1}$.
Since $\cgX_S\cup\{B_1,\ldots,B_{k-m+1},S\}$ is a subfamily of $\cgM$, it is an $(m-1)$-fold layer of $\cgF$.
The induction hypothesis yields $\xi(\cgX_S)\leq\delta_{k,m-1}$.
Hence $\xi(\cgX_j)=\max_{S\in\cgS_j}\xi(\cgX_S)\leq\delta_{k,m-1}$.

Now, we show that $\xi(\cgY_j)\leq 2\gamma_{k-1}$.
We partition $\cgY_j$ into two families $\cgY_j^b$ and $\cgY_j^t$ so that if $F\in\cgY_j^b$, then the bottom side of $F$ is enclosed in some frame in $\cgS_j$, while if $F\in\cgY_j^t$, then the top side of $F$ is enclosed in some frame in $\cgS_j$.
We show that $\omega(\cgY_j^b)\leq k-1$ and $\omega(\cgY_j^t)\leq k-1$.
Then, it will follow that $\xi(\cgY_j)\leq\xi(\cgY_j^b)+\xi(\cgY_j^t)\leq 2\gamma_{k-1}$.

To see that $\omega(\cgY_j^b)\leq k-1$, let $\cgK$ be a clique in $\cgY_j^b$.
Let $F$ be the frame in $\cgK$ with maximum $\bottomside(F)$, and let $S$ be the frame in $\cgS_j$ enclosing the bottom side of $F$.
It follows that $F$ and $S$ intersect.
Moreover, every other frame in $\cgK\setminus\{F\}$ intersects or encloses the bottom side of $F$ and thus intersects $S$ as well.
Hence $\cgK\cup\{S\}$ is a clique in $\cgF$.
This implies $|\cgK|\leq\omega(\cgF)-1\leq k-1$.
The proof that $\omega(\cgY_j^t)\leq k-1$ is analogous.

Finally, we show that $\xi(\cgZ_j)\leq\beta_k^4$.
The definition of $\cgZ_j$ and the assumption that $\cgF$ is non-crossing imply that the right side of no frame in $\cgZ_j$ intersects or is enclosed in any frame in $\cgS_j$.
For each frame $F\in\cgZ_j$, we define four curves, the \emph{short lower}, \emph{short upper}, \emph{long lower} and \emph{long upper trace} of $F$, as follows.
The short lower (upper) trace of $F$ starts at the bottom (top) right corner of $F$ and follows along the bottom (top) side of $F$ and possibly further along $F$ until it reaches an intersection point with a frame $S\in\cgS_j$ on either the bottom (top) or the left side of $F$.
The long lower (upper) trace of $F$ starts at the top (bottom) right corner of $F$, follows along the right side of $F$ to the bottom (top) corner of $F$, and then continues along the short lower (upper) trace until the intersection point with a frame $S\in\cgS_j$.
See Figure \ref{fig:traces} for an illustration.

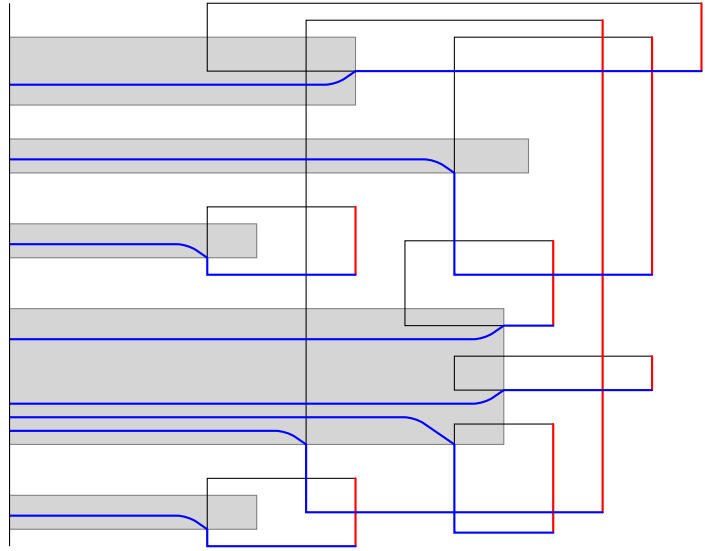
\begin{figure}[t]
\centering
\begin{tikzpicture}[xscale=0.65,yscale=0.45]
  \fill[gray!33!white] (0,0.5) rectangle (5,1.5);
  \fill[gray!33!white] (0,3) rectangle (10,7);
  \fill[gray!33!white] (0,8.5) rectangle (5,9.5);
  \fill[gray!33!white] (0,11) rectangle (10.5,12);
  \fill[gray!33!white] (0,13) rectangle (7,15);
  \draw[gray] (0,0.5)--(5,0.5)--(5,1.5)--(0,1.5);
  \draw[gray] (0,3)--(10,3)--(10,7)--(0,7);
  \draw[gray] (0,8.5)--(5,8.5)--(5,9.5)--(0,9.5);
  \draw[gray] (0,11)--(10.5,11)--(10.5,12)--(0,12);
  \draw[gray] (0,13)--(7,13)--(7,15)--(0,15);
  \draw (4,0)--(4,2)--(7,2);
  \draw (6,3)--(6,15.5)--(12,15.5);
  \draw (9,3)--(9,3.6)--(11,3.6);
  \draw (10,4.6)--(9,4.6)--(9,5.6)--(13,5.6);
  \draw (10,6.5)--(8,6.5)--(8,9)--(11,9);
  \draw (4,8.5)--(4,10)--(7,10);
  \draw (9,11)--(9,15)--(13,15);
  \draw (7,14)--(4,14)--(4,16)--(14,16);
  \draw[red,thick,line cap=round] (7,0)--(7,2);
  \draw[red,thick,line cap=round] (12,1)--(12,15.5);
  \draw[red,thick,line cap=round] (11,0.4)--(11,3.6);
  \draw[red,thick,line cap=round] (13,4.6)--(13,5.6);
  \draw[red,thick,line cap=round] (11,6.5)--(11,9);
  \draw[red,thick,line cap=round] (7,8)--(7,10);
  \draw[red,thick,line cap=round] (13,8)--(13,15);
  \draw[red,thick,line cap=round] (14,14)--(14,16);
  \draw[blue,thick,line cap=round] (4,0.5)--(4,0)--(7,0);
  \draw[blue,thick,line cap=round] (6,3)--(6,1)--(12,1);
  \draw[blue,thick,line cap=round] (9,3)--(9,0.4)--(11,0.4);
  \draw[blue,thick,line cap=round] (10,4.6)--(13,4.6);
  \draw[blue,thick,line cap=round] (10,6.5)--(11,6.5);
  \draw[blue,thick,line cap=round] (4,8.5)--(4,8)--(7,8);
  \draw[blue,thick,line cap=round] (9,11)--(9,8)--(13,8);
  \draw[blue,thick,line cap=round] (7,14)--(14,14);
  \draw[blue,thick,rounded corners] (0,0.9)--(3.6,0.9)--(4,0.5);
  \draw[blue,thick,rounded corners] (0,3.4)--(5.6,3.4)--(6,3);
  \draw[blue,thick,rounded corners] (0,3.8)--(8.2,3.8)--(9,3);
  \draw[blue,thick,rounded corners] (0,4.2)--(9.6,4.2)--(10,4.6);
  \draw[blue,thick,rounded corners] (0,6.1)--(9.6,6.1)--(10,6.5);
  \draw[blue,thick,rounded corners] (0,8.9)--(3.6,8.9)--(4,8.5);
  \draw[blue,thick,rounded corners] (0,11.4)--(8.6,11.4)--(9,11);
  \draw[blue,thick,rounded corners] (0,13.6)--(6.6,13.6)--(7,14);
  \draw (0,0)--(0,16);
\end{tikzpicture}
\caption{Short (blue) and long (blue${}+{}$red) lower traces}
\label{fig:traces}
\end{figure}

All short lower (upper) traces can be connected to the left side of $B_1$ by pairwise disjoint $x$-monotone curves inside the frames in $\cgS_j$, thus forming a family of $x$-semimonotone curves with the same intersection graph.
Any two of these curves intersect in at most one point, because so do any two short lower (upper) traces.
Therefore, by Theorem \ref{thm:suk}, there are proper colorings $\phi_\ell$ and $\phi_u$ of all short lower and short upper traces, respectively, with $\beta_k$ colors.
To prove that $\xi(\cgZ_j)\leq\beta_k^4$, it is enough to show that $\xi(\cgZ')\leq\beta_k^2$ for any family $\cgZ'\subset\cgZ_j$ of frames whose short lower traces have the same color in $\phi_\ell$ and whose short upper traces have the same color in $\phi_u$.

Let $\cgZ'$ be such a family.
Since the short lower (upper) traces of any two frames in $\cgZ'$ are disjoint, their long lower (upper) traces intersect in at most one point.
Consequently, the same argument as for short traces yields proper colorings $\psi_\ell$ and $\psi_u$ of all long lower and long upper traces, respectively, with $\beta_k$ colors.
The intersection of any two frames in $\cgZ'$ whose lower long traces are disjoint and whose upper long traces are disjoint is leftward-directed.
Therefore, the frames in $\cgZ'$ whose long lower traces have the same color in $\psi_\ell$ and whose long upper traces have the same color in $\psi_u$ form a leftward-directed family.
This shows that $\xi(\cgZ')\leq\beta_k^2$.
\end{proof}

\begin{lemma}
\label{lem:geometry-outer}
Every non-crossing family of frames\/ $\cgF$ with\/ $\omega(\cgF)\leq k$ satisfies\/ $\xi(\cgF)\leq\gamma_k$.
\end{lemma}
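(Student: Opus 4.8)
The plan is to prove the bound $\xi(\cgF)\le\gamma_k$ by induction on $k$, with the case $k-1$ feeding the hypothesis of Lemma \ref{lem:geometry-inner}. For $k=1$ we have $\omega(\cgF)\le 1$, so no two frames of $\cgF$ intersect, every connected component is a single frame and hence trivially directed, giving $\xi(\cgF)\le 1=\gamma_1$. So I would assume $k\ge 2$ and that the lemma holds for $k-1$; this is exactly what is needed to discharge the precondition of Lemma \ref{lem:geometry-inner}, since any subfamily $\cgL'\subset\cgF$ with $\omega(\cgL')\le k-1$ is again non-crossing and therefore has $\xi(\cgL')$ bounded by a constant depending only on $k$ by the inductive hypothesis.

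The target constant $\gamma_k=2^k\alpha_k\delta_{k,k}$ factors as three successive reductions, each responsible for one factor, and I would carry them out in this order. First, I would apply Corollary \ref{cor:bfs} with $m=k$ to partition $\cgF$ into $k$-fold layers grouped into $2^k$ subclasses of mutually independent layers; by properties (i) and (ii) of $\xi$ this contributes the factor $2^k$ and reduces the task to bounding $\xi(\cgL)\le\alpha_k\delta_{k,k}$ for a single $k$-fold layer $\cgL$. Second, inside $\cgL$ I would isolate the \emph{enclosure-maximal} frames, those not enclosed by any other frame of $\cgL$. These are pairwise incomparable under enclosure, so the non-crossing assumption forces their filling rectangles to intersect precisely when the frames themselves intersect; hence these filling rectangles form a rectangle family of clique number at most $k$, and by Theorem \ref{thm:rect} they admit a coloring with $\alpha_k$ colors in which equally colored frames have pairwise disjoint filling rectangles. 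Fixing for each frame of $\cgL$ one enclosure-maximal frame $B$ that encloses it and grouping frames by the color of their chosen $B$, equally colored groups lie in disjoint rectangles and are therefore mutually independent; by property (ii) this accounts for the factor $\alpha_k$ and reduces the task to bounding $\xi$ of the frames of $\cgL$ enclosed by a single $B$ by $\delta_{k,k}$.

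Third, for a fixed enclosure-maximal frame $B$, I would invoke Lemma \ref{lem:geometry-inner} with $m=k$ and the single enclosing frame $B_1=B$: here the enclosing clique has size $k-m+1=1$, and the conclusion is exactly the desired bound $\delta_{k,k}$. Multiplying the three factors then yields $\xi(\cgF)\le 2^k\alpha_k\delta_{k,k}=\gamma_k$, completing the induction.

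I expect the main obstacle to be the bookkeeping that makes the third step legitimate, namely ensuring that the family handed to Lemma \ref{lem:geometry-inner} genuinely meets its hypotheses. Concretely, writing the subfamily of $\cgL$ enclosed by $B$ as $\cgL^B$, one must check that $\cgL^B$ is enclosed by $B$ (clear) \emph{and} that $\cgL^B\cup\{B\}$ is a $k$-fold layer of $\cgF$, not merely a subset of the $k$-fold layer $\cgL$. The natural approach is to restrict the whole layer chain $\cgL=\cgL_k\subset\cdots\subset\cgL_0=\cgF$ to the frames contained in $\rect(B)$ and to verify that restricting a layer to the frames enclosed by $B$ preserves the layer property at each of the $k$ steps, so that the fold count lands exactly on $k$. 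Verifying this preservation—and thereby threading the enclosing frame $B$ correctly into the layer chain—is the technical heart; once it is in place, the remaining estimates are the routine applications of Corollary \ref{cor:bfs}, Theorem \ref{thm:rect}, and Lemma \ref{lem:geometry-inner} described above.
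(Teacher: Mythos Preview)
Your approach matches the paper's proof exactly: induction on $k$, Corollary~\ref{cor:bfs} with $m=k$ for the factor $2^k$, Theorem~\ref{thm:rect} applied to the enclosure-maximal frames for the factor $\alpha_k$, and Lemma~\ref{lem:geometry-inner} with a single enclosing frame $B_1=B$ for the factor $\delta_{k,k}$.

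You do, however, misjudge where the difficulty lies in the third step. The fact that $\cgL^B\cup\{B\}$ is a $k$-fold layer of $\cgF$ is nearly immediate, because \emph{any subset of a layer of a family is again a layer of that family}: if $\cgL'\subset\cgL$ and $\cgL$ is a layer of $\cgM$, then every frame of $\cgM\setminus\cgL$ external to $\cgL$ is also in $\cgM\setminus\cgL'$ and external to $\cgL'$ (the union of filling rectangles only shrinks), and every frame of $\cgL'$ already lies in $\cgL$. Consequently any subset of an $m$-fold layer of $\cgF$ is itself an $m$-fold layer of $\cgF$: one simply replaces the top $\cgL_m$ of the witnessing chain by the subset and leaves $\cgL_{m-1},\ldots,\cgL_0=\cgF$ untouched. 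The paper uses this observation without comment both here and inside the proof of Lemma~\ref{lem:geometry-inner} (``Since $\cgX_S\cup\{B_1,\ldots,B_{k-m+1},S\}$ is a subfamily of $\cgM$, it is an $(m-1)$-fold layer of $\cgF$'').

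Your proposed resolution---restricting the entire chain $\cgL=\cgL_k\subset\cdots\subset\cgL_0=\cgF$ to frames inside $\rect(B)$---would in fact not work as stated: the bottom of the chain must remain $\cgF$, so intersecting every $\cgL_i$ with the frames in $\rect(B)$ destroys the conclusion ``$k$-fold layer of $\cgF$''. So the ``technical heart'' you anticipate is a one-line observation, not a chain-restriction argument.
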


\begin{proof}
The proof goes by induction on $k$.
The statement is trivial for $k=1$, so assume that $k\geq 2$ and the statement holds for $k-1$.
Let $\cgF$ be a non-crossing family of frames with $\omega(\cgF)\leq k$.
By Corollary \ref{cor:bfs}, $\cgF$ has a partition $\scrP$ into $k$-fold layers, and moreover, $\scrP$ can be partitioned into $2^k$ subclasses $\scrP_1,\ldots,\scrP_{2^k}$ each consisting of mutually disjoint $k$-fold layers.
We show that $\xi(\cgL)\leq\alpha_k\delta_{k,k}$ for any $k$-fold layer $\cgL\in\scrP$.
Once this is done, we will have $\xi(\bigcup\scrP_i)\leq\alpha_k\delta_{k,k}$ (because each $\scrP_i$ consists of mutually disjoint $k$-fold layers) and thus
\begin{equation*}
\xi(\cgF)\leq\sum_{i=1}^{2^k}\xi({\textstyle\bigcup\scrP_i})\leq 2^k\alpha_k\delta_{k,k}=\gamma_k.
\end{equation*}

Let $\cgL\in\scrP$.
Let $\cgR$ be the family of those frames in $\cgL$ that do not lie inside any frame in $\cgL$.
Every frame in $\cgL\setminus\cgR$ is enclosed in some frame in $\cgR$.
Hence $\cgL\setminus\cgR$ can be partitioned into $|\cgR|$ families $\cgL_R$ for $R\in\cgR$ so that every frame in $\cgL_R$ is enclosed in $R$.
Since $\cgL_R\cup\{R\}$ is a $k$-fold layer of $\cgF$, it follows from Lemma \ref{lem:geometry-inner} that $\xi(\cgL_R)\leq\delta_{k,k}$.
The intersection graph of $\cgR$ is isomorphic to the intersection graph of the filling rectangles of the frames in $\cgR$, so Theorem \ref{thm:rect} yields $\chi(\cgR)\leq\alpha_k$.
Hence $\cgR$ can be partitioned into $\alpha_k$ subfamilies $\cgR_1,\ldots,\cgR_{\alpha_k}$ each consisting of frames with pairwise disjoint filling rectangles.
For $1\leq j\leq\alpha_k$, the families $\cgR_j$ and $\cgL_R$ for $R\in\cgR_j$ are mutually disjoint, so
\begin{equation*}
\xi\Bigl(\cgR_j\cup\bigcup_{R\in\cgR_j}\cgL_R\Bigr)=\max\bigl\{\xi(\cgR_j),\max_{R\in\cgR_j}\xi(\cgL_R)\bigr\}\leq\delta_{k,k}.
\end{equation*}
Therefore,
\begin{equation*}
\xi(\cgL)\leq\sum_{j=1}^{\alpha_k}\xi\Bigl(\cgR_j\cup\bigcup_{R\in\cgR_j}\cgL_R\Bigr)\leq\alpha_k\delta_{k,k}.\qedhere
\end{equation*}
\end{proof}

\begin{proof}[Proof of Lemma \ref{lem:frame-geometry}]
Let $\cgF$ be a family of frames with $\omega(\cgF)\leq k$.
As it has been explained at the beginning of this section, it is enough to bound $\xi(\cgF)$ in terms of $k$.
By Lemma \ref{lem:cross}, we can partition $\cgF$ into $k$ non-crossing families $\cgF_1,\ldots,\cgF_k$.
By Lemma \ref{lem:geometry-outer}, we have $\xi(\cgF_i)\leq\gamma_k$ for $1\leq i\leq k$.
Hence $\xi(\cgF)\leq\sum_{i=1}^k\xi(\cgF_i)\leq k\gamma_k$.
\end{proof}

\begin{proof}[Proof of Lemma \ref{lem:frame-clean}]
Let $\cgF$ be a triangle-free family of frames.
By Lemma \ref{lem:bfs}, $\cgF$ has a partition $\scrP$ into layers, which can be further split as $\scrP=\scrP_1\cup\scrP_2$ so that each of $\scrP_1$ and $\scrP_2$ consists of mutually disjoint layers.
We are going to show that each layer in $\scrP$ is a clean family of frames.
This will show that the subfamilies $\bigcup\scrP_1$ and $\bigcup\scrP_2$ of $\cgF$ satisfy the conclusion of the~lemma.

Choose any layer $\cgL\in\scrP$.
Suppose there are three frames $F_1,F_2,F_3\in\cgL$ such that $F_1$ and $F_2$ intersect and $F_3$ is enclosed in both $F_1$ and $F_2$.
By the definition of a layer, there is a frame $E\in\cgF\setminus\cgL$ external to $\cgL$ and intersecting $F_3$.
Clearly, $E$ must intersect both $F_1$ and $F_2$, thus creating a triangle in the intersection graph of $\cgF$.
This contradiction shows that $\cgL$ is indeed a clean family of frames.
\end{proof}

\section{Problems}

The authors of \cite{PKK+14} asked for the maximum possible chromatic number of a triangle-free intersection graph of $n$ segments.
In this paper, we resolved a similar question for frames.
The following problems ask whether segment graphs behave similarly to frame graphs with respect to containment of overlap game graphs.

\begin{problem}
Can every triangle-free segment intersection graph be decomposed into a bounded number of overlap game graphs?
\end{problem}

\begin{problem}
Does every triangle-free segment intersection graph with chromatic number $k$ contain an overlap game graph with chromatic number at least $ck$ as an induced subgraph, for some absolute constant $c>0$?
\end{problem}

The positive answer to either question would yield the bound of $\Theta(\log\log n)$ on the chromatic number of triangle-free segment intersection graphs, while the negative answer would help us understand the limitations of our methods.

Most results of this paper concern triangle-free graphs, but similar statements are likely to hold for $K_k$-free graphs.

\begin{problem}
What is the maximum possible chromatic number of a $K_k$-free overlap graph of $n$ rectangles (intersection graph of $n$ frames)?
\end{problem}

\bibliographystyle{plain}
\bibliography{loglog}

\end{document}